\title[Optical fiber modes by FEAST]
{Computing leaky modes of optical fibers using a FEAST algorithm
  for polynomial eigenproblems}
\author{J.~Gopalakrishnan}
\address{Portland State University, PO Box 751, Portland OR 97207,USA }
\email{gjay@pdx.edu}
\author{B.~Q.~Parker}
\address{Portland State University, PO Box 751, Portland OR 97207,USA }
\email{bqp2@pdx.edu}
\author{P.~VandenBerge}
\address{Portland State University, PO Box 751, Portland OR 97207,USA }
\email{piet2@pdx.edu}
\theoremstyle{plain}
\newtheorem{theorem}{Theorem}
\newtheorem{lemma}[theorem]{Lemma}
\theoremstyle{remark}
\newtheorem{remark}[theorem]{Remark}
\newtheorem{problem}[theorem]{Problem}
\newcommand{\C}{\mathbb{C}}
\newcommand{\R}{\mathbb{R}}
\newcommand{\Ac}{\mathcal{A}}
\newcommand{\Bc}{\mathcal{B}}
\newcommand{\Kc}{\mathcal{K}}
\newcommand{\Ec}{\mathcal{E}}
\newcommand{\Ect}{\tilde{\mathcal{E}}}
\newcommand{\Sc}{\mathcal{S}}
\newcommand{\Sct}{\tilde{\mathcal{S}}}
\newcommand{\Vc}{\mathcal{V}}
\newcommand{\Vct}{\tilde{\mathcal{V}}}
\newcommand{\ii}{\hat{\imath}}
\newcommand{\ut}{\tilde{u}}
\newcommand{\uh}{\hat{u}}
\newcommand{\vt}{\tilde{v}}
\newcommand{\xt}{\tilde{x}}
\newcommand{\xh}{\hat{x}}
\newcommand{\Rt}{\tilde{R}}
\newcommand{\St}{\tilde{S}}
\newcommand{\Sm}[1]{S^{({#1})}}
\newcommand{\Pt}{\tilde{P}}
\newcommand{\Et}{\tilde{E}}
\newcommand{\Xt}{\tilde{X}}
\newcommand{\Yt}{\tilde{Y}}
\newcommand{\Vt}{\tilde{V}}
\newcommand{\Wt}{\tilde{W}}
\newcommand{\vL}{\varLambda}
\newcommand{\veps}{\varepsilon}
\newcommand{\diag}{\mathop{\mathrm{diag}}}
\newcommand{\diam}{\mathop{\mathrm{diam}}}
\renewcommand{\Gamma}{\varGamma}
\renewcommand{\Lambda}{\varLambda}
\renewcommand{\vL}{\varLambda}
\newcommand{\vG}{\varGamma}
\newcommand{\ran}{\mathop{\mathrm{ran}}}
\newcommand{\Rh}{{\hat{R}}}
\newcommand{\rh}{{\hat{r}}}
\newcommand{\rt}{{\tilde{r}}}
\newcommand{\ec}{e_{\mathrm{cap}}}
\newcommand{\tc}{t_{\mathrm{cap}}}
\newcommand{\dc}{d_{\mathrm{cap}}}
\newcommand{\Rci}{R_{i, \mathrm{cap}}}
\newcommand{\Rco}{R_{o, \mathrm{cap}}}
\newcommand{\Rcore}{R_{\mathrm{core}}}
\newcommand{\tclad}{t_{\mathrm{clad}}}
\newcommand{\Rout}{\Rh_{\mathrm{fin}}}
\newcommand{\Rfin}{\Rh_{\mathrm{fin}}}
\newcommand{\om}{\varOmega}
\newcommand{\oh}{\varOmega_h}
\newcommand{\nair}{{n_{\mathrm{air}}}}
\newcommand{\nsi}{{n_{\mathrm{Si}}}}
\newcommand{\omsi}{{\varOmega_{\mathrm{Si}}}}
\newcommand{\ompml}{{\varOmega_{\mathrm{pml}}}}
\newcommand{\omint}{{\varOmega_{\mathrm{int}}}}
\newcommand{\Ho}{\mathring{H}}
\def\d{\partial}
\renewcommand{\Im}{\mathop{\mathrm{imag}}}
\renewcommand{\Re}{\mathop{\mathrm{real}}}
\newcommand{\jmp}[1]{\ldbrack{{#1}}\rdbrack}
\pgfplotsset{width=7cm,compat=1.12}
\begin{document}


\begin{abstract}
  An efficient contour integral
technique to approximate a cluster of nonlinear eigenvalues
of a polynomial eigenproblem, circumventing
certain large inversions from a linearization, is presented.
It is applied to 
the nonlinear eigenproblem that arises from a frequency-dependent
perfectly matched layer.
This approach is shown to
result in an accurate method for computing leaky modes of optical
fibers.  Extensive computations on an antiresonant fiber with a
complex transverse microstructure are reported. This structure
is found to present substantial computational difficulties:
Even when employing
over one million degrees of freedom, the fiber model appears to remain in
a preasymptotic regime where computed confinement loss values are
likely to be off by orders of magnitude. Other difficulties
in computing mode losses, together with practical techniques to overcome
them, are detailed.


\end{abstract}

\keywords{antiresonant,  optical fiber, nonlinear,
  eigenvalue, PML, FEAST}

\maketitle

\section{Introduction}   \label{sec:introduction}

In this paper, we bring together recent advances in contour integral
eigensolvers and perfectly matched layers to improve techniques for
computing transverse modes of optical fibers. Unlike classical
step-index optical fibers, many emerging microstructured optical
fibers do not have perfectly guided modes. Yet, they can quite
effectively guide energy in leaky modes,
also known as quasi-normal modes or
resonances.  Confinement losses of leaky modes, and their accurate
computation, are of considerable practical importance. We approach
this computation by solving a nonlinear (polynomial) eigenproblem
obtained using a frequency-dependent perfectly matched layer (PML) and
high order finite element discretizations.

The PML we use is the one recently studied
by~\cite{NannenWess18}. Although their essential idea is the same as
the early works on PML~\cite{Beren94, ChewWeedo94,ColliMonk98}, their
work is better appreciated in the following context.  While adapting
the PML for source problems to eigenproblems,
many~\cite{AraujEngst17, GopalMoskoSanto08, KimPasci09} preferred a
frequency-independent PML over a frequency-dependent PML. This is
because for eigenproblems obtained using PML, the ``frequency'' is
related to the unknown eigenvalue, so a frequency-dependent approach
results in equations with a nonlinear dependence on the unknown
eigenvalue, and hence a nonlinear eigenproblem.  In contrast, a
frequency-independent approach results in a standard linear
generalized eigenproblem, for which many standard solvers exist.
However, the authors of~\cite{NannenWess18} made a compelling case for
the use of frequency-dependent PML by showing an overall reduction in
spurious modes and improved preasymptotic eigenvalue approximations.
The price to pay in their approach is that instead of
a linear eigenproblem, one must solve a nonlinear (rational)
eigenproblem. Excluding a zero singularity, one can reduce this to a
polynomial eigenproblem.

One of the goals of this paper is to show that such polynomial
eigenproblems can be solved using a contour integral
eigensolver, recently popularized in numerical linear algebra under
the name ``the FEAST algorithm'' \cite{GuttePolizTang15,
  Poliz09}. Accordingly, we begin our study in Section~\ref{sec:feast}
by introducing the algorithm and our adaptation of it to
polynomial eigenproblems. We use a well-known
linearization~\cite{GohLanRod82} of a degree $d$ polynomial
eigenproblem to get a linear eigenproblem with $d$ times as many
unknowns as the original nonlinear eigenproblem.  This $d$-fold
increase in size is prohibitive, especially for applications like the
computation of leaky optical modes, which, as we shall see in
Section~\ref{sec:microfiber}, will need several millions of degrees of
freedom (before linearization).  In Section~\ref{sec:feast}, we show
how to overcome this problem. Exploiting the fact that FEAST only
requires the application of the resolvent of the linearization,
we develop an identity for this large linearized resolvent in terms of
smaller nonlinear resolvents of the original size. This leads to our
eigensolver presented in Algorithm~\ref{alg:polyfeast}.

In Section~\ref{sec:leaky}, we formulate the equations for the
transverse leaky modes of general optical fibers nondimensionally,
introduce the equations of frequency-dependent PML, use an arbitrary
order finite element discretization, and present the resulting cubic
eigenproblem. The algorithm developed in the previous section is then
applied. An interesting feature that derives from the combination of this
discretization with our algorithm
is that any spurious mode formed of finite element functions
supported only in the PML region is automatically eliminated from the
output eigenspace. This is because the formulation sends
such functions to the eigenspace of~$\infty$. This section also has a
verification of the correctness of our approach using a semianalytical
calculation for leaky modes of step-index fibers.

\begin{figure}
  \centering
  \begin{subfigure}{0.5\linewidth}
    \begin{tikzpicture}[scale=0.06]


      \fill [blue!30,even odd rule,opacity=0.6] (0,0)
      circle[radius=50]
      circle[radius=37.5];


      \draw[red, dotted] (14,0)
      arc [radius=14.0, start angle=0, end angle=30];
      \draw[red, dotted] (14,0)
      arc [radius=14.0, start angle=0, end angle=-30];

      \draw[red, ->] (0,0) -- (14,0)
      node [scale=0.75, black, midway, below]
      {{{$\Rcore$}}};

      \draw[red,<->] (37.5, 0)--++(12.5, 0)
      node [scale=0.75, black, midway, below]
      {{{$\tclad$}}};

      \draw[red, ->] (0,0) -- ++(60:56) 
      node [scale=0.75,  pos=0.615, above, sloped, black]
      {{$R_0$}};

      \draw[red, dotted] (60:56) 
      arc [radius=56.0, start angle=60, end angle=70];
      \draw[red, dotted] (60:56) 
      arc [radius=56.0, start angle=60, end angle=0];

      \fill [blue!30,even odd rule,opacity=0.6] (0.0, 26.505) 
      circle [radius=12]
      circle [radius=10];
      

      
      
      
      \fill [blue!30,even odd rule,opacity=0.6] (-22.954003327306545, 13.2525) 
      circle [radius=12]
      circle [radius=10];
      
      \fill [blue!30,even odd rule,opacity=0.6] (-22.954003327306545,
      -13.2525) circle [radius=12] circle [radius=10];
      
      \draw[<->, red]
      (-22.954003327306545, 1.2525) -- (-22.954003327306545, -1.2525)
      node [scale=0.7, black, midway, right] {$\dc$};
      
      \fill [blue!30,even odd rule,opacity=0.6] (0.0, -26.505) 
      circle [radius=12]
      circle [radius=10];

      \fill [blue!30,even odd rule,opacity=0.6] (22.954003327306545, -13.2525) 
      circle [radius=12]
      circle [radius=10];

      \fill [blue!30,even odd rule,opacity=0.6] (22.954003327306545,
      13.2525) circle [radius=12] circle [radius=10];
    \end{tikzpicture}
    \subcaption{Shaded and white areas indicate glass and
      air, respectively}
    \label{fig:geom}
  \end{subfigure}
  \qquad 
  \begin{subfigure}{0.4\linewidth}
    \begin{tikzpicture}
      [scale=0.15, show background rectangle,inner frame sep=0mm]

      \begin{scope}
        
        \clip (-20, 20) rectangle (20, 45); 
        

        \fill [blue!30,even odd rule,opacity=0.6] (0,0)
        circle[radius=60]
        circle[radius=37.5];


        \draw[red, dotted] (0,0) circle [radius=14.0];
        \draw[red, ->, dotted] (0,0) -- (14,0)
        node [scale=0.75,black, midway, above,fill=white] {{\tt{Rc}}};
        \fill[red] (0,0) circle (20pt);
        
        \fill [blue!30,even odd rule,opacity=0.6] (0.0, 26.505) 
        circle [radius=12]
        circle [radius=10];
        

        
        
        \draw[red, <->] (7.071067811865475, 33.57606781186547) --
        (8.48528137423857, 34.99028137423857) node [scale=0.75, black,
        above, right] {$\tc$}; 
        
        \fill [blue!30,even odd rule,opacity=0.6] (-22.954003327306545, 13.2525) 
        circle [radius=12]
        circle [radius=10];
        
        \fill [blue!30,even odd rule,opacity=0.6] (-22.954003327306545,
        -13.2525) circle [radius=12] circle [radius=10];
        
        \draw[<->, red]
        (-22.954003327306545, 1.2525) -- (-22.954003327306545, -1.2525)
        node [scale=0.7, black, midway, right] {$\dc$};
        
        \fill [blue!30,even odd rule,opacity=0.6] (0.0, -26.505) 
        circle [radius=12]
        circle [radius=10];

        \fill [blue!30,even odd rule,opacity=0.6] (22.954003327306545, -13.2525) 
        circle [radius=12]
        circle [radius=10];

        \fill [blue!30,even odd rule,opacity=0.6] (22.954003327306545,
        13.2525) circle [radius=12] circle [radius=10];



        \draw[->,red] (0.0, 35.505) -- (0.0, 37.505);
        \draw[->,red] (0.0, 40.505) -- (0.0, 38.505)
        node [scale=0.75, black, right] {$\ec$};

        \draw[->, red] (0.0, 26.505) -- (10.0, 26.505)
        node [scale=0.75, black, midway, above] {$\Rci$};

        \draw[red, ->] (0.0, 26.505) --++(-20:12)
        node [scale=.75, black, midway, sloped, below] {$\Rco$}; 
      \end{scope}
    \end{tikzpicture}
    \subcaption{Zoomed in view near the top capillary tube}
    \label{fig:geom-zoom}
  \end{subfigure}
  \caption{Transverse geometry of a microstructured 
    fiber~\cite{KolyaKosolPryam13,Polet14,YuKnigh16}}
  \label{fig:microgeom}
\end{figure}
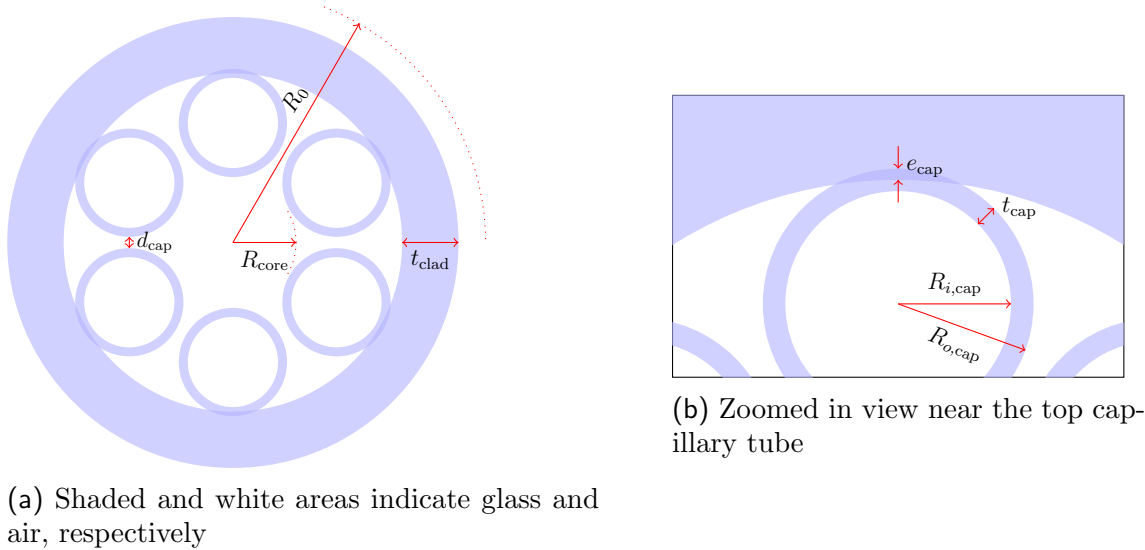

In Section~\ref{sec:microfiber}, we consider a
microstructured optical fiber. Recent microstructured fibers fall into
two categories: photonic band gap fibers, and antiresonant
fibers. Emerging fibers of the latter class seem not to have received
much attention in the mathematical literature, although they are
actively pursued in the optics literature~\cite{KolyaKosolPryam13,
  Polet14, YuKnigh16}. Antiresonant optical fiber designs with
air-filled hollow cores are particularly interesting since dispersion,
nonlinear optical effects, and propagation losses are all negligible
in air.  We study such a fiber in Section~\ref{sec:microfiber},
providing enough detail in the hope that it may serve as a benchmark
problem for others. The fiber geometry is illustrated in
Figure~\ref{fig:microgeom}.

Some of the difficulties we encountered while computing the mode
losses in Section~\ref{sec:microfiber} are worth noting here.
For microstructured fibers with thin structural elements, we have
generally found it difficult to find perfect
agreement between our converged
loss values and those
reported in the optics literature produced using proprietary
software.
Our results in Section~\ref{sec:microfiber}
illuminate the issue.
For the fiber we considered, we found a surprisingly large
preasymptotic regime where confinement loss values jump orders of
magnitude when mesh size ($h$) and finite element degree ($p$) are
varied. Hence it seems possible to find agreement with whatever
loss value in the literature, experimental or
numerical, by simply adjusting model and PML parameters, while
the discretization is in the preasymptotic regime. But such agreement
is meaningless. 
In view of the results of Section~\ref{sec:microfiber},
we cannot recommend trusting
computed
resonance values and confinement losses reported without any evidence
of them having stabilized over variations in $h$ and $p$ (or unsupported by
other convergence studies).  We will show multiple routes to get to
the asymptotic regime of converging eigenvalues by either decreasing
$h$ or by increasing $p$. In our experience, quicker routes to this
asymptotic regime are generally offered by the latter.

While searching for core modes in such microstructured fibers, one
should be wary of modes that carry energy in structures outside of the
hollow core. Although these are unwanted modes, they are not spurious
modes---they are actual eigenmodes of the
structure. (Figure~\ref{fig:highlossmodes} shows such unwanted
non-core modes; cf.~core modes in Figure~\ref{fig:modes_poletti}.)
Another issue, well-recognized by
many~\cite{AraujEngst17,GopalMoskoSanto08,KimPasci09,NannenWess18} in
other resonance computations, is
the interference of spurious modes that arise from the discretization
of the essential spectrum (and Figure~\ref{fig:search} in this paper
also provides a glimpse of
this issue).  In Section~\ref{sec:microfiber}, we
indicate a way to overcome this problem to some extent by using an
elliptical contour in our eigensolver.  Considering the
expected deformation of the essential spectrum due to the PML, one may
adjust the eccentricity of the ellipse to probe spectral regions of
interest fairly close to the origin without wasting computational
resources on unwanted eigenfunctions from the deformed essential
spectrum.

In Section~\ref{sec:proof}, we present 
proofs of the two theorems presented in the next section. We close
with concluding remarks in Section~\ref{sec:conclusion}.

\section{A FEAST algorithm for polynomial eigenproblems}
\label{sec:feast}

Consider the problem of finding a targeted cluster of nonlinear
eigenvalues, enclosed within a given contour, and its associated
eigenspace. (The problem of interest  is precisely stated
as Problem~\ref{prb:nonlin} in
Subsection~\ref{ssec:polyneigen} below.)
The FEAST algorithm \cite{GuttePolizTang15,
  KestyPolizTang16, Poliz09} is one type of contour integral
eigensolver that addresses such problems.
A version of the algorithm for nonlinear eigenproblems
was presented in~\cite{GavMiePol18}, but here we shall pursue specific
simplifications possible when the nonlinearity is of polynomial type.
We begin by describing the standard FEAST algorithm for linear
eigenproblems.  Consider $A, B \in \C^{n \times n}$ and the linear
generalized eigenproblem of finding numbers $\lambda \in \C,$ an
associated (right) eigenvector $0 \ne x \in \C^n$, and a left
eigenvector $0 \ne \xt \in \C^n$, satisfying
\begin{equation}
  \label{eq:4}
  \xt^* A = \lambda \xt^* B, \qquad A x = \lambda B x.  
\end{equation}
We shall also consider left and right
generalized eigenvectors of such eigenproblems, since they are needed
to formulate an accurate relation between ranges of certain
spectral projectors at the foundation of the algorithm. Recall that 
the left and right algebraic eigenspaces (or generalized eigenspaces)
of a linear eigenvalue
are, respectively, the spans of its left and right generalized
eigenvectors. Generalizations of these concepts to the nonlinear
case are defined in Subsection~\ref{ssec:polyneigen}.

\subsection{Spectral projector approximation}
\label{sec:stdfeast}
 
Suppose we want to compute a cluster of eigenvalues, collected into a
set $\vL$, and its accompanying (right) algebraic eigenspace, denoted
by $E \subset \C^n$, and left algebraic eigenspace $\Et \subset \C^n$.
The wanted eigenvalues, namely elements of $\Lambda,$ are known to be
enclosed within $\Gamma$, a positively oriented, bounded, simple, closed
contour that does not cross any eigenvalue.

The matrix-valued integrals
\begin{align}
  \label{eq:SpectralProjection}
  S=\frac{1}{2\pi\ii}\oint_\Gamma (z B - A)^{-1} B \, dz, \qquad
  \St =\frac{1}{2\pi\ii}\oint_{\Gamma} (z B - A)^{-*} B^* \, dz,
\end{align}
sometimes called Riesz projections, or spectral projectors, are well
known to yield projections onto the eigenvalue cluster's right and
left algebraic eigenspaces $E$ and $\Et$, respectively (see
e.g.,~\cite[page~39]{Kato95} or \cite[Theorem~1.5.4]{Davie07}).
Here and throughout, we use $^*$ and $'$ to denote conjugate transpose
and transpose, respectively, so $M^* = \bar{M}'$ for any matrix $M$
and we abbreviate $(M^*)^{-1}$ to $M^{-*}$ for invertible $M$.
Focusing on $E$ for the moment, $E$ is the eigenspace of $S$
associated to its eigenvalue one. The only other
eigenvalue of $S$ is zero. The
algebraic eigenspaces of all the eigenvalues of~\eqref{eq:4} not
enclosed by $\vG$ have been mapped to the eigenspace of the zero
eigenvalue of~$S$.  Hence, if we can compute $S$, then a well-known
generalization of the power iteration (namely the subspace iteration)
when applied to $S$ will converge to $E$ in one iteration.  Along the
same lines, we also conclude that a subspace iteration with $\St$ will
converge at once to~$\Et$.

The FEAST algorithm is simply a subspace iteration,
performed after replacing $S$ and $\St$ by computable quadrature
approximations.
These quadrature approximations of $S$ and $\St$ take the form
\begin{equation}
  \label{eq:SN}
  S_N = \sum_{k=0}^{N-1} w_k (z_k B - A)^{-1}  B, 
  \qquad \St_N = \sum_{k=0}^{N-1} \bar{w}_k (z_k B - A)^{-*}  B^*,  
\end{equation}
for some $w_k \in \C$ and points $z_k \in \Gamma$. Then, the
mathematical statement of the FEAST algorithm is as follows: given initial
right and left subspaces $E_0, \Et_0 \subset \C^n$, compute two sequences
of subspaces, $E_\ell$ and $\Et_\ell$, by
\begin{equation}
  \label{eq:5}
  E_\ell = S_N E_{\ell-1},
  \qquad
  \Et_\ell = \St_N \Et_{\ell-1}
  \qquad \text{ for } \ell = 1, 2, \ldots.  
\end{equation}
Here and throughout
we use
$M Z$ to denote $\{ M z: z \in Z\}$
for a matrix $M \in \C^{n \times n}$ and a
{\em subspace} $Z \subseteq \C^n$. A practical implementation of the FEAST
algorithm is not as simple as~\eqref{eq:5} because it must
additionally take care of normalization and computation of Ritz values
for each subspace iterate~\cite{GuttePolizTang15, KestyPolizTang16,
  Poliz09}.

In this paper, we will use only circular and elliptical
contours for $\Gamma$. Both have been 
studied previously~\cite{GopalGrubiOvall20a, GuttePolizTang15}, so we
will be brief.
Letting
$\phi = \pi/N$, we 
parametrize a circle $\Gamma$ of radius
$\gamma>0$ centered at $y \in \C$, in terms of $\theta$, by
$\Gamma = \{ \gamma \exp( \ii (\theta + \phi) + y : 0 \le \theta < 2
\pi\}$. Transforming the integrals over $z$
in~\eqref{eq:SpectralProjection} as integrals over $\theta$, and then
applying the trapezoidal rule at equally spaced
$N$ values of $\theta$, shifted by $\phi$, we obtain
a quadrature approximation of $S$ as in~\eqref{eq:SN}, with 
\begin{equation}
  \label{eq:zkwk}
  \theta_k = \frac{2 \pi k}{N}, \quad
  z_k = \gamma \exp(\ii (\theta_k + \phi)) + y,
  \quad
  w_k = \frac{\gamma}{N} \exp(\ii(\theta_k + \phi).  
\end{equation}
By estimating the separation between
wanted and unwanted eigenvalues after the spectral mapping by $S_N$,
it is possible to estimate the rate of convergence of the subspace
iteration~\eqref{eq:5} while employing this quadrature (see e.g.,
\cite[Example~2.2]{GopalGrubiOvall20a} or~\cite{GuttePolizTang15}).

We will also use  elliptical contours later.
Letting $\gamma>0, \rho>1, y \in \C$,  we restrict
ourselves to Bernstein ellipses
$\Gamma = \{y + \gamma e^{\ii \theta}\rho /(\rho+\rho^{-1})
+ \gamma e^{-\ii \theta}
\rho^{-1}/(\rho+\rho^{-1}): 0\le \theta < 2 \pi\}$ 
aligned with the coordinate axes.
We again use
an $N$~point uniform trapezoidal rule, shifting the ellipse
parametrization $\theta$ by $\phi =  \pi/N$. Simple computations then
lead to the formulas
\begin{equation}
  \label{eq:zkwk-ellipse}
  z_k = y +
  \gamma \frac{\rho e^{\ii (\theta_k+\phi) } +
    \rho^{-1}e^{-\ii(\theta_k+\phi)} } {\rho+\rho^{-1}},
  \quad
  w_k = \gamma
  \frac{\rho e^{\ii (\theta_k+\phi) }
    - \rho^{-1} e^{-\ii(\theta_k+\phi)} }{N(\rho+\rho^{-1})}
\end{equation}
where $\theta_k = 2\pi k /N$. These are the values we shall use
in~\eqref{eq:SN} when we need elliptical contours later.

\subsection{Polynomial eigenproblems}
\label{ssec:polyneigen}

In this subsection we establish notation for
polynomial eigenproblems and consider a nonlinear eigenvalue cluster
approximation problem.
Suppose we are given $d+1$ matrices
$A_i \in \C^{n \times n}$, $i=0, \ldots, d$.
We assume that the last matrix 
$A_d$ is nonzero (in order to fix the grade $d$), but do
not assume that $A_d$ is invertible.
Let $\C^+ = \C \cup \{\infty\}$, the extended complex plane.
We consider the {\em
  polynomial eigenproblem} of finding a
$\lambda \in \C^+$ satisfying
\begin{subequations}
  \label{eq:polyeigprob}
  \begin{equation}
    \label{eq:1}
    P(\lambda) x = 0, \quad \xt^* P(\lambda) = 0 
  \end{equation}
  for some nontrivial $x, \xt \in \C^n$, called the nonlinear right and
  left 
  eigenvectors, respectively. Here 
   $P(z)$ is a matrix polynomial of degree $d$,  given by
  \begin{equation}
    \label{eq:2}
    P(z) = \sum_{j=0}^d  z^j A_j. 
  \end{equation}
\end{subequations}
Since $P(z)$ is  a square matrix, $\lambda$ may equivalently
be thought of as a root of the nonlinear
equation $\det P(z) =0$. The algebraic multiplicity of the
nonlinear eigenvalue $\lambda$ is its multiplicity as a root of
the polynomial $\det P(z)$, a polynomial which we
assume does not  vanish everywhere.  Note that
$\lambda = \infty$ is said to be
an eigenvalue of~\eqref{eq:polyeigprob}
if zero is an eigenvalue of
$z^dP(z^{-1}) = z^d A_0 + z^{d-1} A_1 + \cdots + A_d$ \cite{GutteTisse17, TisMee01}.

Next, let $P^{(l)}(z)$ denote the $l^{\text{th}}$ derivative
($d^l P/ dz^l$) of $P$ with respect to the complex variable $z$.
Ordered sequences
$x_0, x_1, \ldots, x_{k-1}$ and
$\xt_0, \xt_1, \ldots, \xt_{k-1}$
in $\C^n$
are respectively
called~\cite{GutteTisse17} right and left Jordan chains for the matrix
polynomial $P$ at $\lambda$ if
\begin{equation}
  \label{eq:18}
  \sum_{l=0}^j \frac 1 {l!} P^{(l)}(\lambda)  x_{j-l} = 0,
  \qquad
  \sum_{l=0}^j \frac 1 {l!} \xt_{j-l}^* P^{(l)}(\lambda)  = 0,
  \qquad j=0, 1, \ldots, k-1.  
\end{equation}
When $k=1$, the chains reduce to singletons and~\eqref{eq:18}
coincides with the equation for an eigenvector~\eqref{eq:1}.  For more
general $k$, the vectors of these chains are referred to as (right and
left) nonlinear generalized eigenvectors.  The right and left {\em
  algebraic eigenspaces of a set of nonlinear eigenvalues} $\vL$ are,
respectively, the span of all the right and left nonlinear generalized
eigenvectors associated to every~$\lambda$ in $\vL$.  These
definitions generalize the standard notion of algebraic eigenspace for
the linear eigenproblem: indeed, in basic linear algebra, one
respectively
calls
the sequences $x_0, x_1, \ldots, x_{k-1} \in \C^n$
and $\xt_0, \xt_1, \ldots, \xt_{k-1} \in \C^n$
a right and left Jordan chain of
$A - \lambda B \in \C^{n \times n}$ if, for all  $i=1, 2,\ldots, k-1,$
\begin{subequations}
  \begin{gather}
      \label{eq:17}
      (A - \lambda B) x_0 = 0,
       \text{ and }
      (A - \lambda B) x_i = Bx_{i-1},
      \\\label{eq:17adj}
      \xt_0^*(A - \lambda B) =0,
      \text{ and }
      \xt_i^*( A- \lambda B) = \xt_{i-1}^* B.
  \end{gather}
\end{subequations}
It is easy to see that~\eqref{eq:17} and~\eqref{eq:17adj}
are respectively equivalent to the first and second 
equalities of~\eqref{eq:18}, when $P(\lambda)$ is set to the linear
matrix polynomial $A - \lambda B$.  With these notions, we can state
the nonlinear analogue of the eigenvalue cluster approximation problem
considered in Section~\ref{sec:stdfeast}.

\begin{problem}
  \label{prb:nonlin}
  Compute a cluster $\vL$ of nonlinear eigenvalues of $P(z)$ enclosed
  within  $\Gamma$ and its accompanying right and left algebraic
  eigenspaces $E$ and $\Et$, respectively.
\end{problem}

In the study of matrix polynomials, the concept of a linearization is
crucial~\cite{GohLanRod82}. The first companion linearization of
$P(z)$ is the matrix pencil $\Ac - z \Bc \in \C^{nd \times nd}$, shown
below in a $d \times d$ block partitioning where the blocks are
elements of $ \C^{n \times n}$:
\begin{equation}
  \label{eq:15}
  \Ac
  =
  \begin{bmatrix}
    0    &  I   & 0       & \cdots & 0 \\
    0    &  0   & I     &  \ddots      &  \vdots \\
 \vdots  &\vdots & \ddots  & \ddots & 0\\
   0     & 0    &  \cdots &     0  & I \\
   A_0   & A_1  & \cdots  & A_{d-2} & A_{d-1}
 \end{bmatrix},
 \quad
 \Bc
   = 
  \begin{bmatrix}   
    I    & 0      & \cdots  & \cdots &   0 \\
    0    & I      & \ddots  &        & \vdots \\
    \vdots& \ddots& \ddots  & \ddots     & \vdots\\
    \vdots&       & \ddots  & I & 0 \\
    0 &   \cdots  & \cdots  & 0 & -A_d
 \end{bmatrix}.  
\end{equation}
Here and throughout,
$I$ denotes the identity matrix (whose dimensions may differ
 at different occurrences, but will always be clear from context).
Let us note a well known connection between the nonlinear
eigenproblem~\eqref{eq:polyeigprob} and the linear eigenproblem
\begin{equation}
  \label{eq:6}
  \Ac X = \lambda \Bc X, \qquad
  \Xt^*\Ac = \lambda \Xt^* \Bc,
\end{equation}
for nontrivial  left and right eigenvectors $X$ and $\Xt$,
respectively,
and a ``linear'' eigenvalue~$\lambda$.
We block partition 
$Y  \in \C^{nd \times m}$ using blocks $Y_i$ in $ \C^{n \times m}$
(where the $m=1$ case represents  a block partitioning of column vectors)
as shown below, where we also define $F \in \C^{n \times nd}$ and
$L \in
\C^{n \times nd},$
all using a block partitioning  compatible with~\eqref{eq:15}:
\begin{equation}
  \label{eq:19}
  Y =
  \begin{bmatrix}
    Y_0 \\ Y_1 \\ \vdots \\ Y_{d-1}
  \end{bmatrix},
  \quad
  F =
  \begin{bmatrix}
    I &  0 &  \cdots & 0
  \end{bmatrix},
  \quad
  L =
    \begin{bmatrix}
      0 &  0 &  \cdots & I
  \end{bmatrix}.
\end{equation}
It is well known \cite{GohLanRod82} that $\lambda$ is a nonlinear
eigenvalue of the polynomial eigenproblem~\eqref{eq:polyeigprob} of
algebraic multiplicity $k$ if and only if it is a linear eigenvalue of
algebraic multiplicity $k$ of the linearization~\eqref{eq:6}.
Hence researchers \cite{TisMee01} have pursued the computation of
polynomial eigenvalues by standard eigensolvers applied to the linear
eigenproblem~\eqref{eq:6}. To do so using the 
FEAST algorithm,
the connection between the eigenspaces
of~\eqref{eq:6} and~\eqref{eq:polyeigprob} must be made precise, as
done in
Theorem~\ref{thm:projlin} below using $F$ and $L$.
Let us first describe the ingredients
of the algorithm  applied to the
linearization.

Replacing $A, B$ by $\Ac, \Bc,$ respectively,
in~\eqref{eq:SpectralProjection}
and~\eqref{eq:SN} we define $\Sc, \Sct, \Sc_N,$ and $\Sct_N$:
\begin{equation}
  \label{eq:15-big}
  \begin{aligned}
  \Sc & =
  \frac{1}{2\pi\ii}\oint_\Gamma (z \Bc - \Ac)^{-1} \Bc \, dz,
  &&
  \Sct =
  \frac{1}{2\pi\ii}\oint_{\Gamma} (z \Bc - \Ac)^{-*} \Bc^* \, dz, 
  \\  
    \Sc_N
    & = \sum_{k=0}^{N-1} w_k (z_k \Bc - \Ac)^{-1}  \Bc, 
    &&
    \Sct_N = \sum_{k=0}^{N-1} \bar{w}_k (z_k \Bc - \Ac)^{-*}  \Bc^*.
  \end{aligned}
\end{equation}
Given initial right and left subspaces
$\Ec_0, \Ect_0 \subset \C^{nd}$, the  FEAST algorithm, as
written out in~\eqref{eq:5}, computes a sequence of subspaces
$\Ec_\ell, \Ect_\ell$ by
\begin{equation}
  \label{eq:5big}
  \Ec_\ell = \Sc_N \Ec_{\ell-1},
  \qquad
  \Ect_\ell = \Sct_N \Ect_{\ell-1}
  \qquad \text{ for } \ell = 1, 2, \ldots.  
\end{equation}
In analogy with $E$ and $\Et$, we denote the right and left algebraic
eigenspaces of $z\Bc - \Ac$ associated to its (linear) eigenvalues
enclosed within $\vG$ by $\Ec$ and $\Ect$, respectively. Of course,
they are~\cite{Kato95}, respectively, the ranges of the Riesz
projections $\Sc$ and $\Sct$.  The relationships between these spaces
and the algebraic eigenspaces of the nonlinear $P(z)$ are given in the
next result, which can be concluded from well known results on matrix
polynomials.  We give a self-contained proof in
Section~\ref{sec:proof}. 

\begin{theorem}
  \label{thm:projlin}
  Let $E$ and $\Et$ be the right and left algebraic eigenspaces of the
  nonlinear eigenvalues of $P(z)$ enclosed in $\vG$, respectively.
  Then
  \begin{enumerate}
  \item $E = F \Ec$,
  \item $\Et = L\Ect$.
  \end{enumerate}
\end{theorem}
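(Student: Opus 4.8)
The plan is to recast Jordan chains as \emph{root polynomials} and to transfer them between $P(z)$ and its first companion pencil $\Ac-z\Bc$ by means of two elementary polynomial-matrix identities. First, by Taylor expanding $P$ about $\lambda$, an ordered tuple $x_0,\dots,x_{k-1}\in\C^n$ is a right Jordan chain of $P$ at $\lambda$ if and only if the $\C^n$-valued polynomial $x(z)=\sum_{j=0}^{k-1}(z-\lambda)^j x_j$ satisfies $P(z)\,x(z)=O\bigl((z-\lambda)^k\bigr)$, because the system~\eqref{eq:18} is exactly the statement that the first $k$ Taylor coefficients of $z\mapsto P(z)x(z)$ vanish. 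The same equivalence holds for left chains, now using the row-vector polynomial $\xt(z)^*=\sum_{j}(z-\lambda)^j\xt_j^*$ and the condition $\xt(z)^*P(z)=O\bigl((z-\lambda)^k\bigr)$; and, since~\eqref{eq:17} is the linear instance of~\eqref{eq:18}, the same equivalences hold verbatim for the right and left Jordan chains of $\Ac-z\Bc$. Second, since the nonlinear eigenvalues of $P$ inside $\Gamma$ coincide with the eigenvalues of $\Ac-z\Bc$ inside $\Gamma$ (the multiplicity correspondence recalled above), the spaces $E,\Ec$ and $\Et,\Ect$ are the spans of the corresponding Jordan chains taken over one common finite set $\Lambda$. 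It therefore suffices to prove, for each fixed $\lambda\in\Lambda$, that $F$ maps the right Jordan chains of $\Ac-z\Bc$ at $\lambda$ onto those of $P$ at $\lambda$, and that $L$ does the same on the left.

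For part~(1), introduce the block-column polynomial $\Phi(z)=\bigl[\,I;\ zI;\ \cdots;\ z^{d-1}I\,\bigr]$. A direct block multiplication using~\eqref{eq:15} gives
\[
  (\Ac-z\Bc)\,\Phi(z)=\begin{bmatrix} 0\\ \vdots\\ 0\\ P(z)\end{bmatrix},
  \qquad
  F\,\Phi(z)=I .
\]
If $x(z)$ is a right root polynomial of $P$ at $\lambda$ of order $k$, then $(\Ac-z\Bc)\Phi(z)x(z)=O\bigl((z-\lambda)^k\bigr)$, so the truncation $X(z)$ of $\Phi(z)x(z)$ modulo $(z-\lambda)^k$ is a right root polynomial of $\Ac-z\Bc$ at $\lambda$; its Taylor coefficients then lie in $\Ec$, and $FX(z)\equiv F\Phi(z)x(z)=x(z)\pmod{(z-\lambda)^k}$ shows that $F$ sends them back to $x_0,\dots,x_{k-1}$, whence $E\subseteq F\Ec$. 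Conversely, let $X(z)=\sum_{j<k}(z-\lambda)^j X_j$ be any right root polynomial of $\Ac-z\Bc$ at $\lambda$ and set $x(z)=FX(z)$. The first $d-1$ block rows of $(\Ac-z\Bc)X(z)=O\bigl((z-\lambda)^k\bigr)$ read $X^{(i+1)}(z)\equiv z\,X^{(i)}(z)$ for $i=0,\dots,d-2$ (congruences modulo $(z-\lambda)^k$, with $X^{(i)}$ the $i$-th block of $X$), which forces $X(z)\equiv\Phi(z)x(z)$; substituting this into the last block row gives $P(z)x(z)=O\bigl((z-\lambda)^k\bigr)$, so $FX_0,\dots,FX_{k-1}$ is a Jordan chain of $P$ at $\lambda$ and $F\Ec\subseteq E$. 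The two inclusions give $E=F\Ec$.

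Part~(2) is obtained by the same argument transposed to the left: $\Phi(z)$ and $F$ are replaced by the $n\times nd$ polynomial matrix $\Psi(z)$ whose $j$-th block, for $j=0,\dots,d-2$, is the negated Horner shift $-\bigl(A_{j+1}+zA_{j+2}+\cdots+z^{d-1-j}A_d\bigr)$, capped by a trailing identity block, and a block multiplication in~\eqref{eq:15} gives the dual identities
\[
  \Psi(z)\,(\Ac-z\Bc)=\begin{bmatrix} P(z) & 0 & \cdots & 0\end{bmatrix},
  \qquad
  \Psi(z)\begin{bmatrix} 0\\ \vdots\\ 0\\ I\end{bmatrix}=I ,
\]
with the block map $L$ now in the role formerly played by $F$. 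Then $\xt(z)^*\Psi(z)$ is a left root polynomial of $\Ac-z\Bc$ whose last block is $\xt(z)^*$; and conversely the block columns $1,\dots,d-1$ of $\Xt(z)^*(\Ac-z\Bc)=O\bigl((z-\lambda)^k\bigr)$ force $\Xt(z)^*$ to be reconstructed from its last block through $\Psi(z)$, after which block column $0$ exhibits that last block as a left root polynomial of $P$. This gives $\Et=L\Ect$.

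I expect the one delicate point to be the step forcing $X(z)\equiv\Phi(z)FX(z)$ and its transpose: one must propagate the first-order recursions across the block components while carefully carrying the $O\bigl((z-\lambda)^k\bigr)$ remainders, and then confirm that passing to Taylor coefficients really does identify these root-polynomial manipulations with operations on genuine Jordan chains in the sense of~\eqref{eq:18}. The two pairs of polynomial identities, and the reduction to a single eigenvalue via the multiplicity correspondence, are routine. Note finally that $\Gamma$ being bounded means only finitely many finite eigenvalues are in play, so the possible $\infty$-eigenvalue of $\Ac-z\Bc$ (when $A_d$ is singular) never enters the argument.
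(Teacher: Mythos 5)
Your argument is correct and complete, and it takes a recognizably different formal route from the paper's, even though the underlying mechanism---transferring Jordan chains between $P$ and its companion pencil through a block-Vandermonde structure---is the same. The paper works in coefficient space: its Lemma~\ref{lem:chain} (a chain is equivalent to $\sum_i A_i V J^i=0$ with $J$ a $k\times k$ Jordan block) is proved by a binomial resummation, and the transfer to the pencil is the identity $\Ac\Vc=\Bc\Vc J$ for $\Vc=[V;\,VJ;\,\dots;\,VJ^{d-1}]$ in~\eqref{eq:3}. You instead encode chains as root polynomials, which turns Lemma~\ref{lem:chain} into a one-line Taylor-coefficient statement, and you replace the block system by the two one-sided polynomial identities that place $P(z)$ in the last block row of $(\Ac-z\Bc)\Phi(z)$ and in the first block column of $\Psi(z)(\Ac-z\Bc)$; your $\Phi(z)x(z)$ is exactly the generating polynomial of the paper's $\Vc$, since multiplication by $z$ modulo $(z-\lambda)^k$ acts on Taylor coefficients as $J$. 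What your version buys: the combinatorics disappear, the two inclusions become symmetric applications of the same congruence bookkeeping, and---unlike the paper, which only sketches part~(2) by appeal to the second companion linearization---you prove $\Et=L\Ect$ explicitly for the first companion form through $\Psi$. The only steps to write out in full are the ones you flag yourself: the propagation $X^{(i)}\equiv z^i X^{(0)}$ modulo $(z-\lambda)^k$ and its left analogue, together with the easy observation that $FX_0=0$ forces $X_0=0$ (and likewise $L\Xt_0=0$ forces $\Xt_0=0$), so truncation by $F$ or $L$ never collapses a genuine chain to a trivial one.
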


In view of Theorem~\ref{thm:projlin}, when the FEAST
algorithm~\eqref{eq:5big} converges to
$\Ec, \Ect$, mere truncation by $F$ and $L$ is guaranteed to yield the
algebraic eigenspaces needed in Problem~\ref{prb:nonlin}.

\begin{remark}  
  FEAST algorithms employing other contour integrals that can provably
  recover the wanted spaces $E, \Et$ (like in
  Theorem~\ref{thm:projlin}) are worthy of pursuit. To indicate why
  this might not be trivial, consider
  \[
    S_1= \frac{1}{2\pi \ii}
    \oint_\vG P(z)^{-1}\, dz.
  \]
  Even if it might
  appear to be a reasonable nonlinear generalization of the linear
  resolvent integral, for $P(z) = (z^2 - 1) A$ with any invertible
  $A \in \C^{n\times n}$, one can easily verify that $S_1=0$ when
  $\vG$ encloses both the nonlinear eigenvalues $\pm 1$ of $P(z)$.
  See also $S_2$ in Remark~\ref{rem:compare}.
\end{remark}

\subsection{An algorithm for solving polynomial eigenproblems}
\label{ssec:polyfeast}

In this subsection, we describe an efficient implementation
of~\eqref{eq:5big}. Implementing~\eqref{eq:5big} as stated would
require the inversion of $N$ linear systems of size $n d \times nd$,
significantly larger than the size of the $n \times n$ matrix
polynomial $P(\lambda)$. For large $nd$, due to the fill-in of sparse
factorizations, applying and storing $(z_k \Bc - \Ac)^{-1}$ at each
quadrature point $z_k$ becomes very expensive.  This drawback is
particularly serious for our application in
Section~\ref{sec:microfiber}, where as we shall see, each $A_i$ is
given as a large sparse matrix with $n \approx 10^7$.  Therefore, we
propose an implementation requiring only the inversion (or sparse
factorization) of $n \times n$ matrices (rather than  $nd \times nd$
matrices) at each quadrature point $z_k$, using the next result.

\begin{theorem}
  \label{thm:resolvent}
  Suppose $P(z)$ is invertible at some $z \in \C$ and consider
  $X, Y, W \in \C^{nd}$ block partitioned as in~\eqref{eq:19}.  Then
  the following identities hold.
  \begin{enumerate}
  \item The block components of $X = (z\Bc - \Ac)^{-1} Y$ are
    given by
  \begin{subequations}
    \label{eq:Xcomps}
    \begin{align}
      \label{eq:Xcomps-0}
          X_0
      & = P(z)^{-1}
        \left(
        - Y_{d-1} - A_d Y_{d-1}
        + \sum_{i=1}^d A_i \sum_{j=0}^{i-1} z^{i-1-j} Y_j         
        \right)
      \\ \label{eq:Xcomps-i}
      X_i & = z X_{i-1} - Y_{i-1}, \qquad i=1, 2, \ldots, d-1.
    \end{align}
  \end{subequations}
  \item  The block components of $\Xt = (z\Bc - \Ac)^{-*} W$ are given
    by
    \begin{subequations}
      \label{eq:Xcomps-adj}
      \begin{align}
        \label{eq:Xcomps:d-1}
        {\Xt}_{d-1}
        & = -P(z)^{-*} 
          \sum_{j=0}^{d-1} \bar{z}^j W_j,
          \quad
          {\Xt}_{d-2}
         = -W_{d-1} - \bar z A_d^* {\Xt}_{d-1} - A_{d-1}^*{\Xt}_{d-1},
        \\        \label{eq:Xcomps:i}
        {\Xt}_{i}
        & = -W_{i+1} + \bar z {\Xt}_{i+1} - A_{i+1}^* {\Xt}_{d-1},\qquad
          i=0, 1, \ldots, d-3.
      \end{align}
    \end{subequations}
  \end{enumerate}
\end{theorem}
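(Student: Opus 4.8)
The plan is to solve the two block linear systems $(z\Bc - \Ac) X = Y$ and $(z\Bc - \Ac)^* \Xt = W$ directly, exploiting the bidiagonal-plus-border structure of the companion pencil in~\eqref{eq:15}. Since $P(z)$ is invertible, the computation will in fact produce a unique solution in each case; this simultaneously shows that $z\Bc - \Ac$ is invertible at such $z$ and identifies the block components of $(z\Bc-\Ac)^{-1}Y$ and $(z\Bc-\Ac)^{-*}W$.

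For part~(1), I would write $(z\Bc - \Ac) X = Y$ one block row at a time using~\eqref{eq:15}. The first $d-1$ block rows read $z X_i - X_{i+1} = Y_i$, which is precisely the recursion~\eqref{eq:Xcomps-i}; solving it downward from $X_0$ gives $X_i = z^i X_0 - \sum_{j=0}^{i-1} z^{i-1-j} Y_j$ for $i = 0, \dots, d-1$. The last block row is $-\sum_{j=0}^{d-2} A_j X_j - (z A_d + A_{d-1}) X_{d-1} = Y_{d-1}$. Substituting the expressions for the $X_j$ and collecting the coefficient of $X_0$, the terms $z^j A_j$ for $j = 0, \dots, d-1$ together with $z\cdot z^{d-1} A_d = z^d A_d$ telescope to $-P(z) X_0$ by~\eqref{eq:2}. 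Rearranging the remaining, already-known terms---and absorbing the stray $A_d$ contribution by adding and subtracting its $j=d-1$ summand---brings them to the form $\sum_{i=1}^{d} A_i \sum_{j=0}^{i-1} z^{i-1-j} Y_j - A_d Y_{d-1}$; applying $P(z)^{-1}$ then yields~\eqref{eq:Xcomps-0}.

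For part~(2), I would use $(z\Bc - \Ac)^* = \bar z \Bc^* - \Ac^*$ and write $(\bar z \Bc^* - \Ac^*)\Xt = W$ block row by block row. Block row $d-1$ gives $\Xt_{d-2} = -W_{d-1} - \bar z A_d^* \Xt_{d-1} - A_{d-1}^* \Xt_{d-1}$, the second identity of~\eqref{eq:Xcomps:d-1}, while block rows $1,\dots,d-2$ give, after the shift $i\mapsto i+1$, the recursion~\eqref{eq:Xcomps:i}. Unfolding that recursion and then substituting the expression just found for $\Xt_{d-2}$ expresses $\Xt_0$ as a linear combination of $\Xt_{d-1}$ and $W_1,\dots,W_{d-1}$, with the coefficient of $\Xt_{d-1}$ equal to $-\sum_{i=1}^{d}\bar z^{\,i-1} A_i^*$. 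Finally, block row $0$ reads $\bar z\Xt_0 - A_0^*\Xt_{d-1} = W_0$; inserting the expression for $\Xt_0$ raises each power of $\bar z$ by one, so the coefficient of $\Xt_{d-1}$ becomes $-\sum_{i=0}^{d}\bar z^{\,i}A_i^* = -P(z)^*$ (using $\bar z^{\,i}A_i^* = (z^iA_i)^*$) and the $W_j$ terms collect into $-\sum_{i=0}^{d-1}\bar z^{\,i}W_i$; applying $P(z)^{-*}$ gives the first identity of~\eqref{eq:Xcomps:d-1}.

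The only conceptual ingredient is the invertibility of $P(z)$, invoked exactly once in each part to invert the pivot block ($X_0$, respectively $\Xt_{d-1}$). I expect the main obstacle to be bookkeeping rather than anything deep: keeping the off-by-one indices straight in the block partition of~\eqref{eq:15} (the position of $-A_d$ in $\Bc$ and of the $-A_i$ row in $\Ac$), tracking the powers of $z$ and $\bar z$ generated by unfolding the recursions, and carrying out the add-and-subtract rearrangements that compress the leftover terms into the displayed sums. A useful sanity check throughout is to specialize to $d=1$ (where the pencil is already linear) and $d=2,3$.
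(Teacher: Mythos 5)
Your proposal is correct and follows essentially the same route as the paper's proof: both parts proceed by writing the block system row by row, unfolding the bidiagonal recursions, and isolating the pivot block ($X_0$, resp.\ $\Xt_{d-1}$) by recognizing the sum $\sum_i z^i A_i$ (resp.\ its adjoint) and applying $P(z)^{-1}$ (resp.\ $P(z)^{-*}$). The only cosmetic difference is in part~(2), where you unfold the recursion from $\Xt_0$ and substitute into block row $0$, while the paper multiplies each row by $\bar z^i$ and sums so the terms telescope---the same computation in a different order.
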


\begin{alg}
  \caption{Polynomial FEAST Eigensolver for Problem~\ref{prb:nonlin}}
  \label{alg:polyfeast}
  \parbox{\textwidth}{Input contour $\Gamma$, quadrature $z_k, w_k$,  sparse coefficient
    matrices $A_0, \ldots, A_{d-1}, A_d\in \C^{n \times n}$, initial
    right and left eigenvector iterates given as columns of
    $Y, \Yt\in \C^{nd \times m}$, respectively, block partitioned as
    in~\eqref{eq:19} into $Y_j, \Yt_j \in \C^{n \times m}$, and
    tolerance $\veps>0$.}
\begin{pseudo}[kw]
  setup \\+
  \tn{Prepare $P(z_k)^{-1}$ by sparse factorization at each
    quadrature point~$z_k$.}
  \\-
  repeat \\+
  \label{algstep:inloop-1}
  \tn{Set all entries of workspace $\Rt, R \in \C^{nd \times m}$ to $0$.}
  \\
  for \tn{each $z_k$,\quad $k=0, \ldots, N-1$, do:}\\+
  \tn{Compute block components of $X\in \C^{nd\times m}$:}
  \\+ \label{algstep:Sappl0}
  \tn{
    $\displaystyle{
      X_0 \leftarrow 
       P(z_k)^{-1}\sum_{i=1}^d\sum_{j=0}^{i-1}
        z_k^{i-1-j}  A_i Y_j,
     }$
  }
  \\ \label{algstep:Sappl1}
  \tn{
    for $i=1, \ldots, d-1$ do: \;
     $\displaystyle{
      X_i \leftarrow 
      z_k X_{i-1} - Y_{i-1}
    }$.}
  \\-
  \tn{Increment $R \mathrel{+}= w_k X$.}
  \\
  \tn{Compute block components of $\Xt\in \C^{nd\times m}$:}
  \\+  \label{algstep:Stappl-start}
  \tn{
    $\displaystyle{
      \Xt_{d-1} \leftarrow 
      P(z_k)^{-*}
      \sum_{j=0}^{d-1}
      \bar z_k^{j} \Yt_j,
    }$
  }
  \\
  \tn{
    $\displaystyle{
      \Xt_{d-2}
      \leftarrow
      -\Yt_{d-1}
      - \bar z_k A_d^* \Xt_{d-1}
      - A_{d-1}^* \Xt_{d-1}},$
  }
  \\ \label{algstep:Sappl-end}
  \tn{
    for $i=d-3, \ldots, 1, 0,$ \;do: \; 
    $\displaystyle{
      \Xt_{i}
      \leftarrow
      A_{i+1}^* \Xt_{d-1} -\Yt_{i+1} + \bar z_k \Xt_{i+1}.
    }$ \quad 
  }
  \\-
  \tn{Increment $\Rt \mathrel{+}= \bar w_k \Xt$.}
  \\-
  endfor
  \\ \label{algstep:kerclean-begin}
  \tn{$G \leftarrow \Rt^* \Bc R$.}
  \\
  \tn{Compute biorthogonal $V, \Vt\in \C^{m\times m}$ such that 
    $\Vt^* G V = \diag(d_1, \ldots, d_m).$}
  \\
  \tn{$Y \leftarrow R V$, \; $\Yt \leftarrow \Rt \Vt$.}
  \\ 
  for \tn{$\ell=1, \ldots, m$ do:}
  \\+
  \tn{If $d_\ell \approx 0$: then remove $\ell$th columns of $\Yt$ and
    $Y$,}
  \\
  \tn{else: rescale $\ell$th column of $\Yt$ and $Y$ by $|d_\ell|^{-1/2}$.}
  \\- 
  endfor    \label{algstep:kerclean-end}
  \\ \label{algstep:Ritz}
  \tn{Assemble small Ritz system:
    $A_Y \leftarrow  \Yt^* \Ac Y$, \quad
    $B_Y \leftarrow \Yt^*\Bc Y$.}
  \\
  \label{algstep:denseeig}
  \tn{Compute Ritz values 
    $\vL = \diag(\lambda_1, \ldots, \lambda_m)$ and
    $W, \Wt \in \C^{m\times m}$  satisfying }\\*
  & \tn{$\Wt^* A_Y W = \vL$, \quad
    $\Wt^* B_Y W = I$.}
  \\
  \label{algstep:Yupdate}
  \tn{$Y \leftarrow Y W$, \quad $\Yt \leftarrow \Yt \Wt$.}
  \\
  \tn{Periodically check:}
  \tn{if $\lambda_\ell$ falls outside $\vG$,
    remove  $\ell$th columns of $Y$ and $\Yt$.}
  \\-
  until \tn{maximal difference of successive $\vL$ iterates is
    less than $\veps$.}
  \\
  output \tn{eigenvalue cluster $\{\lambda_\ell\}$,
    left \& right eigenvectors in
    columns of $L\Yt$ and $FY$.}
\end{pseudo}
\end{alg}

\bigskip

Theorem~\ref{thm:resolvent} is proved in Section~\ref{sec:proof}.  A
FEAST implementation based on it is given in
Algorithm~\ref{alg:polyfeast}, which we now describe.  The algorithm
is written with small $(m)$ eigenvalue clusters and large $(n)$ sparse
$A_i$ in mind ($m \ll n$).  We also have in mind semisimple
eigenvalues, since we want to use standard software tools for small
diagonalizations (avoiding the complex issue of stable computation of
generalized eigenvectors).  Computation of $\Sc_N Y$ and $\Sct_N \Yt$
occur in steps~\ref{algstep:Sappl0}--\ref{algstep:Sappl1}
and~\ref{algstep:Stappl-start}--\ref{algstep:Sappl-end} of
Algorithm~\ref{alg:polyfeast}, via the identities of
Theorem~\ref{thm:resolvent}. After the computation of $\Sc_N Y$
and $\Sct_N \Yt$, the algorithm assembles a small ($m \times m$) Ritz
system, based on the new eigenspace iterate, in
step~\ref{algstep:Ritz}. Subsequently, step~\ref{algstep:denseeig}
attempts to diagonalize this.  In practice, one must also handle
exceptions in the event this diagonalization fails due to a (close to)
defective eigenvalue, details which we have omitted from
Algorithm~\ref{alg:polyfeast}, since we did not need them in our
application.

Recall that we do not require $\Bc$ to be invertible. 
Lines~\ref{algstep:kerclean-begin}--\ref{algstep:kerclean-end}
of the algorithm 
remove vectors in $\Kc = \ker \Bc$, the null space of
$\Bc$, from the iteration.  It is immediate from~\eqref{eq:15-big} that
$\Kc$ is contained in the eigenspaces of $\Sc$ and $\Sc_N$ associated
to their zero eigenvalue. Since these operators have their dominant
eigenvalue away from zero, the subspace iteration~\eqref{eq:5big} will
{\em filter out elements of $\Kc$} from its iterates.  
Let $K = \ker A_d$. Note that 
$\Kc = L' \,K$.  From the definitions in Subsection~\ref{ssec:polyneigen}, it is
obvious that any nontrivial element of $K$ is an eigenvector
of~\eqref{eq:polyeigprob} corresponding to eigenvalue $\infty.$
Therefore, $\Kc$ being filtered out amounts to filtering out the
eigenspace of $\lambda=\infty$.


In the optics applications we are about to consider in the next two
sections, $A_d$ turns out to be Hermitian and negative
semidefinite. Then, $\Bc = \Bc^*$ is positive semidefinite and
$(x, y)_\Bc = y^* \Bc x$ defines a semi-inner product (and an inner
product on $\Kc^\perp$). Moreover, it is easy to see that
\begin{equation}
  \label{eq:21}
  (\Sc x, y)_\Bc = (x, \Sct y)_\Bc, \qquad 
  (\Sc_N x, y)_\Bc = (x, \Sct_Ny)_\Bc  
\end{equation}
for all $x, y \in \C^{nd}$, i.e., $\Sct$ and $\Sct_N$ are the
$\Bc$-adjoints of $\Sc$ and $\Sc_N$, respectively.  The first equation
of~\eqref{eq:21} implies that $(\Ec, \ker \Sct)_\Bc=0$. Hence the
wanted right eigenfunctions (in $\Ec$)
are $\Bc$-orthogonal to the
unwanted left ones (in $\ker \Sct$), and vice versa,
since we also have
$(\ker \Sc, \Ect)_\Bc = 0$.  When the iterates $Y, \Yt$ of
Algorithm~\ref{alg:polyfeast} converge, their respective column spaces
inherit these orthogonality properties.
Note also that after the update in step~\ref{algstep:Yupdate}, the
columns of the iterates $Y$ and $\Yt$ are
$\Bc$-biorthogonal, i.e., $\Yt^* \Bc Y = I$.

\begin{remark}
  \label{rem:compare}
  Using Theorem~\ref{thm:resolvent}'s \eqref{eq:Xcomps-0}, it is easy to
  see that the contour integral $\Sc$ satisfies
  \begin{equation}
    \label{eq:FSY}
    F \Sc Y  = \frac{1}{2\pi\ii} \oint_\vG P(z)^{-1} \sum_{i=1}^d A_i
    \sum_{j=0}^{i-1} z^{i-1-j} Y_j\, dz, \qquad Y \in \C^{nd \times m}.
  \end{equation}
  By Theorem~\ref{thm:projlin}, the range of $F\Sc$ satisfies
  $\ran(F\Sc) = E$, which was the basis for correctness of
  Algorithm~\ref{alg:polyfeast}.
  Another interesting application of
  Theorem~\ref{thm:resolvent} is in analyzing 
  the algorithm of~\cite{GavMiePol18}, which 
  is based on another contour integral map
  $S_2:\C^{n\times m} \to \C^{n \times m}$, defined for some
  $\mu_i \in \C$ and $y=[y_1, \ldots, y_m] \in \C^{n \times m}$
  (with $y_k \in \C^n$), by
  \[
    \Sm{\mu_k}_2 = \frac{1}{2\pi \ii} \oint_\vG P(z)^{-1}
    \frac{P(z) - P(\mu_k)}{z - \mu_k}\, dz,
    \qquad
    S_2 y = [\Sm{\mu_1}_2 y_1,\ldots, \Sm{\mu_m}_2 y_m ].
  \]
  They seek eigenvector approximations from a different
  space~$R_2 = \sum_{k=1}^m
  \ran{(\Sm{\mu_k}_2)}$. 
  Factoring $z-\mu_k$ out of $P(z) - P(\mu_k)$, we find that 
  \[
    \Sm{\mu_k}_2 y= \frac{1}{2\pi\ii} \oint_\vG
    P(z)^{-1} \sum_{i=1}^d A_i \sum_{j=0}^{i-1} z^{i-1-j} \mu_k^j y\, dz.
  \]
  Comparing with~\eqref{eq:FSY} and choosing $Y_j = \mu_k^j y$, we
  establish that $R_2 \subseteq \ran(F\Sc) = E$. The reverse
  inclusion does not always hold. For example, if $P(z) = \left[
    \begin{smallmatrix}
      1 & z\\
      1 & z^2 
    \end{smallmatrix}
  \right] \in \C^{2\times 2},$ $\mu_1 = \mu_2$, and $\vG$
  encloses both eigenvalues $0, 1$ of $P(z)$, then $R_2$
  is the one-dimensional space spanned by
  $ \left[\begin{smallmatrix} -\mu_1 \\ 1
    \end{smallmatrix}\right],$ while the exact eigenspace $E$ is the
  span of $e_1 = 
  \left[\begin{smallmatrix} 1 \\ 0 
    \end{smallmatrix}\right]$
  and $e_2 =  \left[\begin{smallmatrix} 0 \\1 
    \end{smallmatrix}\right]$. 
  This example also shows that dimensions of $E$ may be lost
  even when applying $S_2$ to a basis of $E$:
  $S_2[e_1, e_2] = [\Sm{\mu_1}_2 e_1, \Sm{\mu_2}_2 e_2] = 
  \left[\begin{smallmatrix} 
      0  & -\mu_2 \\ 
      0 & 1
    \end{smallmatrix}\right]$ for any $\mu_1, \mu_2$.
\end{remark}

\section{Leaky modes of optical fibers}
\label{sec:leaky}

Assuming that the material properties of an optical fiber do not vary
in the longitudinal ($x_3$) direction, we consider the plane (in
$x_1, x_2$ coordinates) of its transverse cross section.  The
refractive index can be modeled as the piecewise function on the
transverse plane,
\begin{subequations}
  \label{eq:fibermodel}
\begin{align}
	\label{eq:12}
	n(x_1, x_2) &
	=
	\begin{cases}
		n_1(x_1, x_2), & r \le R_0,\\
		n_0, & r >R_0,	\\
	\end{cases}
\end{align}
where $r = \sqrt{x_1^2 + x_2^2}$, $R_0>0$ is a radius beyond which the
medium is homogeneous, $n_0$ is a constant representing the refractive
index of the homogeneous medium, and $n_1$ is the given refractive
index of the fiber. The vector Maxwell system for time-harmonic light
propagation is often simplified to a scalar equation (see
\cite{Reide16}, or to see the specific assumptions in this process,
see e.g.,~\cite{DrakeGopalGoswa20}) when computing fiber modes.
Accordingly,
a {\em transverse mode} of the optical fiber is represented by a
nontrivial scalar field $u: \R^2 \to \C$, with an accompanying {\em
  propagation constant} $\beta \in \C$. Together they satisfy the
Helmholtz equation
\begin{align}
	\label{eq:11}
  \Delta u + k^2 n^2 u & = \beta^2 u
                         \qquad \text{ in } \R^2
\end{align}
\end{subequations}
for some wavenumber $k \in \R$ given by the operating frequency of the
fiber.  An isolated real value of $\beta$ and an associated mode
function $u$ that decays exponentially as $r \to \infty$ are usually
referred to as guided modes of the fiber. Here, we are concerned with
computation of complex isolated propagation constants $\beta$ and the
corresponding outgoing field~$u$.  In the optics literature, such
modes are usually referred to as {\em leaky modes} \cite{Marcu91}, but
they are also known by other names such as resonances or quasi-normal
modes~\cite{GopalMoskoSanto08, KimPasci09}. To compute such modes, we
truncate the infinite domain after using a perfectly matched layer
(PML) \cite{Beren94}. Venturing out of the approach of~\cite{Beren94}
and viewing PML as a complex coordinate change resulted in better
understanding of PML~\cite{ChewWeedo94, ColliMonk98, KimPasci09}. In
this section, we follow the recent approach of~\cite{NannenWess18} and
apply the previously described FEAST algorithm to their PML
discretization.

Since the wavenumbers in the optical regime are high and transverse
dimensions of optical fibers are several orders smaller, it is
important to nondimensionalize before discretization. Let $L$ denote a
fixed characteristic length scale for the transverse dimensions of the
fiber.  Then, in the nondimensional variables
\begin{equation}
  \label{eq:23}
  \hat x_1 = \frac{x_1}{L},\quad \hat x_2 = \frac{x_2}{L}, 
\end{equation}
the function $\hat u(\hat x_1, \hat x_2) = u(L \hat x_1, L \hat x_2)$
satisfies $\hat \Delta \hat u = L^{-2} \Delta u$ where
$\hat \Delta = \d^2 / \d \hat x_1^2 + \d^2 / \d \hat x_2^2$, 
and equation~\eqref{eq:11} transforms into
\[
  \hat \Delta \hat u + L^2(k^2 \hat n^2 - \beta^2) \hat u = 0. 
\]
where $\hat n(\hat x_1, \hat x_2) = n(\hat x_1 L, \hat x_2 L)$.  Let 
\[
  Z^2 = L^2(k^2 n_0^2 - \beta^2),
  \qquad
  V(\hat x_1, \hat x_2) = L^2k^2( n_0^2 - \hat n^2).
\]
Clearly, the function $V$ is supported only in the region
$\hat r \le \Rh_0$ where $\Rh_0 = R_0/L$ and $\hat r = r/L$. The
problem of finding a leaky mode pair $u, \beta$ has now become the
problem of finding a constant $Z$ and an associated nontrivial $\hat u$
satisfying
\begin{subequations}
    \label{eq:non-dim-form}
    \begin{align}
      \label{eq:non-dim-form-pde}
      -\hat \Delta \hat u + V \hat u & = Z^2 \hat u, && \quad
    \text{ in } \R^2,
    \\ \label{eq:non-dim-form-bc}
    \hat u \text{ is} & \text{ outgoing,}
    && \quad \text{ as } \hat r \to \infty.
  \end{align}    
\end{subequations}
This form, in addition to being nondimensional, facilitates comparison
with the mathematical physics literature where the spectrum of
$-\Delta + V$ is extensively studied for various ``potential wells'' $V$.

The condition at infinity in~\eqref{eq:non-dim-form-bc} should be
satisfied by the solution in the unbounded region $\rh >
\Rh_0$. There, since $V$ vanishes,
equation~\eqref{eq:non-dim-form-pde} takes the form
\begin{equation}
  \label{eq:25}
\hat \Delta \hat u + Z^2 \hat u = 0, \qquad \rh > \Rh_0.  
\end{equation}
For {\em real} values of $Z$ in this equation, the boundary
condition~\eqref{eq:non-dim-form-bc} is easily realized by the
Sommerfeld radiation condition
$\lim_{\rh \to \infty} \sqrt{\rh} (\d_{\rh} \hat u - \ii Z \hat u) =
0$, which selects outgoing waves. Moreover, in this case, the general
solution in the $\rh > \Rh_0$ region can be derived using separation of
variables:
\begin{equation}
  \label{eq:20}
  \hat u(\rh, \theta)
  = \sum_{\ell=-\infty}^\infty c_\ell H_\ell^{(1)}(Z \rh) e^{\ii \ell
    \theta},
  \qquad \rh > \Rh_0
\end{equation}
for some coefficients $c_\ell$. Here $(\rh, \theta)$ denotes polar
coordinates and $H_\ell^{(1)}$ denotes the $\ell$th Hankel function of
the first kind.  For {\em complex} $Z$, a simple prescription of the
boundary condition~\eqref{eq:non-dim-form-bc} that $\hat u$ ``is
outgoing'' is the requirement that $\hat u$ have the same
form~\eqref{eq:20} even when $Z$ is complex, using the analytic
continuation of the Hankel function from the positive real line.  The
resonances we are interested in computing will have nondimensional
$Z$-values below the real line. (Note that these locations are
different from the locations of the optical propagation constants
$\beta$.)  The well-known~\cite{AbramStegun72} asymptotic behavior of
the Hankel function,
\begin{equation}
  \label{eq:Hankel-asym}
  H^{(1)}_\ell(\zeta) \sim \kappa_\ell\frac{e^{\ii \zeta}}{\zeta^{1/2}},
    \qquad |\zeta|\to \infty, -\pi < \arg \zeta< 2\pi,     
\end{equation}
with $\kappa_\ell = (2/\pi)^{1/2} e^{-\ii(\ell\pi/2 +\pi/4)}$, 
tells us that
when the imaginary part  $\Im(Z) < 0$,
the summands in~\eqref{eq:20}
 blow up exponentially at infinity and hence $\uh$  generally cannot be
$L^2$-normalized (the reason for the name quasi-normal mode).

\subsection{Discretization based on PML}

When interpreted as a complex coordinate change, PML maps the
coordinates $\hat x = (\hat x_1, \hat x_2) \in \R^2$ to
$\xt = (\xt_1, \xt_2) \in \C^2$ using a transformation of the form
\begin{equation}
  \label{eq:24}
  \begin{pmatrix}
    \xt_1 \\ \xt_2
  \end{pmatrix}
  =
  \frac{\eta(\hat r)}{ \hat r}
  \begin{pmatrix}
    \hat x_1 \\  \hat x_2
  \end{pmatrix}
\end{equation}
for some $\eta$ with the property that $\eta(\rh) = \rh$ for
$\rh \le \Rh$ for some $\Rh> \Rh_0$, i.e., the PML starts at $\Rh > \Rh_0$
and leaves the $\rh \le \Rh$ region untouched. Consider what happens
to the solution expression~\eqref{eq:20} under this change of
variable. Substituting $\rt = (\xt_1^2 + \xt_2^2)^{1/2} = \eta(\rh)$
for $\rh$ in~\eqref{eq:20}, we find that the summands now have the
term $H_\ell^{(1)}(Z \rt)= H_\ell^{(1)}(Z \eta(\rh))$ whose asymptotic
behavior for large arguments, per~\eqref{eq:Hankel-asym},
imply that they decay exponentially whenever there are
$c, \rh_1>0$ such that 
\begin{equation}
  \label{eq:22}
  \Im(Z \eta(\rh)) > c\, \hat r, \qquad \rh > \rh_1.
\end{equation}
When~\eqref{eq:22} holds, the leaky mode with exponential blow up is
transformed to a function with exponential decay at infinity.  PML
exploits such an exponential decay to truncate the infinite domain and
impose zero Dirichlet boundary conditions at an artificial boundary
where the transformed solution is close to zero. Such complex
transformations have been effectively utilized in the mathematical
literature of resonances~\cite{Simon73} decades before the term PML
was coined.

Various choices of $\eta$ were proposed in the literature.  For
example, in~\cite{ColliMonk98}, one finds the choice
$\eta(\rh) = \rh + \ii Z^{-1} \varphi(\rh)$, for some non-negative
function $\varphi$, applied to treat the source problem analogous
to~\eqref{eq:25} where $Z$ is viewed as a given wavenumber or
``frequency''. This is a ``frequency dependent'' complex change of
coordinates.  The authors of~\cite{KimPasci09} used another $C^2$
function $0 \le \varphi \le 1$ and set
$\eta(\rh) = \rh + \ii \alpha \rh\varphi(\rh)$ for some constant
$\alpha>0$.  When applied to the eigenproblem~\eqref{eq:non-dim-form},
their choice is independent of the frequency $Z$ (which is now an
unknown eigenvalue), and therefore has the advantage of resulting in a linear
eigenproblem (for $Z^2$). However, since
$\Im (Z \eta(\rh)) = (\Im Z + \alpha \varphi(\rh) \Re Z) \rh, $ the
condition \eqref{eq:22} for exponential decay (while satisfied for
some choices of $\alpha$ in relation to $Z$) is not satisfied by their
choice unconditionally.  This creates practical difficulties in
separating spurious modes from real ones.

Hence a return to a
frequency-dependent choice was advocated in~\cite{NannenWess18, Wess20},
notwithstanding the complication that a $Z$ dependence in $\eta$ would
lead to a nonlinear eigenproblem.  We adopt their choice in our
computations and set
\[
  \eta(\rh) = 
  \left\{
    \begin{aligned}
      &\rh,
      && \quad \rh \le \Rh,
      \\
      &\frac{1+\ii\alpha}{Z}(\rh - \Rh) + \Rh,
      && \quad \rh > \Rh.
    \end{aligned}
  \right.
\]
Then
\begin{equation}
  \label{eq:30}
\Im( Z \eta) = \alpha (\rh - \Rh) + \Rh \Im Z,   
\end{equation}
so~\eqref{eq:22} holds for any $\alpha>0$ and any $Z$
by taking $\rh$ 
large enough.

The mapped eigenfunction $\ut(\rh, \theta) = \uh(\eta(\rh), \theta)$
is approximated in the (complex valued) Sobolev space $H^1(\om)$ on
the finite domain $\om = \{ (\hat r, \theta): \rh < \Rout\}$ where
$\Rout$ is to be chosen large enough. In our computations using the
contour integral solver, the given contour $\vG$ determines the
minimal imaginary part of a potential eigenvalue to be found, which
can be used within~\eqref{eq:30} and~\eqref{eq:Hankel-asym} to ({\it a
  priori}) estimate a distance $\Rout$ that gives a desired decay.
The mapped function $\ut$ satisfies the following variational
formulation: find $\ut \in H^1(\om)$ satisfying
\begin{equation}
  \label{eq:26}
  \int_\om a(\hat x) \nabla \ut \cdot \nabla v
  \, d\hat x
  +
  \int_\om V \ut v\, (\det J) \, d\hat x
  = Z^2 \int_\om \ut v \, (\det J) \, d\hat x
\end{equation}
for all $v \in H^1(\om)$ where $a = (\det J) J^{-1} [J']^{-1}$ and
$J_{ij} = \d \xt_i/\d \hat x_j$ denotes the Jacobian of the complex
mapping. Equation~\eqref{eq:26} is derived by applying the complex
change of variable $r \mapsto \rt$ to~\eqref{eq:non-dim-form-pde},
then multiplying by a test function $v \in H^1(\om)$, integrating by
parts, and using the boundary condition $a \nabla \ut \cdot n =0$ on
$\d\om$. One may also use the Dirichlet boundary conditions (setting
the weak form in $\Ho^1(\om)$ instead) due to the exponential decay
within PML, but using the natural boundary condition allows an
implementation to test ({\it a posteriori}) whether the computed
solution has actually decayed in size at $\d\om$ (and if it has not,
increase $\Rout$ further and recompute).  Tracking the dependence
of $Z$ in each integrand and simplifying, one sees that~\eqref{eq:26}
yields a rational eigenproblem for~$Z$.  As pointed out
in~\cite{NannenWess18, Wess20}, further simplifications are possible by a
judicious choice of test functions, as described next.

Replacing $v$ in~\eqref{eq:26} by $\vt = v \eta(\rh) / \Rh,$ the first
integrand can be simplified to
\begin{equation}
  \label{eq:16auv}
  a(\hat x) \nabla \ut \cdot \nabla \vt
  =
  \frac{\dot{\eta} \rh}{\Rh} \nabla \ut \cdot \nabla v
  + \frac{1}{\Rh}
  \bigg(
  \frac{\eta^2}{\dot{\eta} \rh^3} -
  \frac{\dot{\eta}}{\rh}
  \bigg)
  \xh \xh' \nabla \ut \cdot \nabla v
  +
  \frac{\eta}{\rh^2 \Rh} \nabla \ut \cdot \xh v  
\end{equation}
where $\dot{\eta} = d \eta / d\rh$. Examining the dependence of each
term on $Z$ through $\eta$ and $\dot\eta$, we find that it is now
possible to write this integrand as a polynomial in $Z$, save for a
common factor of $Z^{-1}$. Since we are not interested in the $Z=0$
case as an eigenvalue, we multiply through by $Z$ to get
\[
  \int_\om Za(\hat x) \nabla \ut \cdot \nabla \vt
  \, d\hat x
  +
  \int_\om Z V \ut \vt\, (\det J) \, d\hat x
  = Z^3 \int_\om \ut \vt \, (\det J) \, d\hat x
\]
and express it as a polynomial in $Z$:
\begin{equation}
  \label{eq:27}
    \sum_{i=0}^3 Z^i\, b_i(\ut, v) = 0,
\end{equation}
where  
\begin{align*}
  b_0(w, v)
  & = (1+ \ii\alpha)
    \int_\ompml
    \bigg[
    \frac{\rh }{\Rh} \nabla w \cdot \nabla v +
    \bigg(
    \frac{(\rh - \Rh)^2}{\rh^3} - \frac 1 {\rh}
    \bigg)
    \frac{\xh \xh'}{\Rh}
    \nabla w \cdot \nabla v
    \bigg] d \xh
  \\
  & + (1+ \ii\alpha)
  \int_\ompml
  \frac{\rh - \Rh}{\Rh \rh^2} \nabla u \cdot \xh v
    \,d \xh
    - (1+ \ii\alpha)^3 \int_\ompml \frac{(\rh - \Rh)^2}{\Rh \rh} w v
    \, d\xh,
    \\
  b_1(w, v)
  & = \int_\omint\big( \nabla w \cdot \nabla v +V wv\big) \, d\xh
    +
    \int_\ompml
    \bigg[
    \frac{2(\rh - \Rh)}{\rh^3} \xh \xh' \nabla w \cdot \nabla v
    + \nabla w \cdot \frac{\xh}{\rh^2} v
    \bigg]
    d\xh
  \\
  & -2(1+ \ii \alpha)^2\int_\ompml \frac{\rh - \Rh}{\rh} w v \, d\xh,
  \\
  b_2(w, v)
  & = \frac{\Rh}{1+\ii\alpha} \int_\ompml \frac{\xh\xh'}{\rh^3} \nabla
    w \cdot \nabla v \, d\xh
    - 
    {\Rh}(1+\ii\alpha) \int_\ompml \frac{1}{\rh} w v \, d \xh,
  \\
  b_3(w, v)
  & = -\int_{\omint} w v \; d\xh,
\end{align*}
and
$\ompml = \{ (\rh, \theta) \in \om: \rh > \Rh\}$ and $\omint = \om
\setminus \ompml$.

Finally, to discretize~\eqref{eq:27}, we use a geometrically
conforming triangular finite element mesh $\oh$ and the Lagrange
finite element space $W_{hp} = \{ v \in H^1(\om): \; v|_K$ is a
polynomial of degree at most $p$ in each mesh element $K \in
\oh\}$. Here $h = \max_{K \in \oh} \diam K$ is the mesh size
parameter. We seek a nontrivial $\ut_{hp} \in W_{hp}$ together with a
$Z \in \C$ satisfying
\begin{subequations}
  \label{eq:discreteCubicEVP}
  \begin{equation}
  \label{eq:28}
  \sum_{i=0}^3 Z^i\, b_i(\ut_{hp}, v) = 0 \qquad \text{ for all } v
  \in W_{hp}.
  \end{equation}
Letting $\{\phi_j: j=1, \ldots, n\}$ denote a finite element basis for
$W_{hp}$, we define matrices $A_i \in \C^{n \times n}$ by
\begin{equation}
  \label{eq:29}
    [A_i]_{kl}  = b_i(\phi_l, \phi_k).
\end{equation}
Then, expanding $\ut_{hp}$ in the same basis,
$\ut_{hp} = \sum_{j=1}^n c_j \phi_j$ for some $c \in \C^n$,
equation~\eqref{eq:28} yields the cubic eigenproblem
\begin{equation}
  \label{eq:31}
 P(Z) c =  \sum_{i=0}^3 Z^i A_i  c= 0 
\end{equation}
\end{subequations}
for the coefficient vector $c$ of $\ut_{hp}$.

The eigenproblem~\eqref{eq:discreteCubicEVP} is clearly a problem of
the form~\eqref{eq:polyeigprob} we considered in the previous
section. We shall solve it using Algorithm~\ref{alg:polyfeast} in the
following sections for specific fiber configurations. Note
that the $\Bc$ that arises from~\eqref{eq:discreteCubicEVP}
is not invertible: the matrix $A_3$ from~\eqref{eq:discreteCubicEVP}
that forms the last block of $\Bc$, is Hermitian, is negative
semidefinite, and has a large null space.  All finite element
functions in $W_{hp}$ that are supported in the $\ompml$ region are in
this null space and form one source of spurious modes in typical
resonance computations.  As noted in Subsection~\ref{ssec:polyfeast}, these
functions are associated to the eigenvalue $\infty$ and are  
automatically removed from the subspace iterates in
Algorithm~\ref{alg:polyfeast}, so spurious modes supported in the PML
region cannot pollute the results. This is a useful feature arising
from the combination of the frequency-dependent PML and
Algorithm~\ref{alg:polyfeast}.

\begin{remark}  
  Instead of~\eqref{eq:31}, it is possible to arrive at a cubic
  eigenproblem where all the matrices $A_i$, and hence $P(Z)$, are
  (complex) symmetric, an advantageous feature for some sparse
  factorization techniques. To do so, set
  $\tilde u = (\eta(\rh)/\hat R)^{1/2} \breve u (\hat x) $ and
  $v = (\eta(\rh)/\hat R)^{1/2} \breve v(\hat x)$
  in~\eqref{eq:26}. Then instead of \eqref{eq:16auv}, we obtain
  \[
    a(\hat x) \nabla \ut \cdot \nabla v =
    \frac{\dot{\eta} \rh}{\Rh}
    \nabla \breve u \cdot \nabla \breve v
    +
    \bigg(
    \frac{\eta^2}{\dot{\eta} \rh} -
    {\dot{\eta} \rh}
    \bigg)
    \frac{\xh \xh'}{\Rh}
    \nabla \breve u \cdot \nabla \breve v
    +
    \frac{\eta}{2\Rh\rh^2}(\nabla \breve u \cdot \xh \breve v
    + \breve u \nabla \breve v \cdot \xh)
    +
    \frac{\dot\eta}{4 \Rh\rh} \breve u \breve v,
  \]
  an integrand that is symmetric in $\breve u$ and $\breve v$.
\end{remark}

\subsection{Verification using a step-index fiber}
\label{ssec:verif}

The case of a step-index fiber provides an example for verifying
numerical methods for computing leaky modes. Leaky modes can be solved
 in closed form for step-index fibers. A step-index fiber is
modeled by a cylindrical core region of a constant refractive index
surrounded by a cladding region of a slightly lower constant
refractive index. Since the cladding diameter is usually many times
larger than the core diameter, the modes of the fiber can be
approximated using the problem~\eqref{eq:fibermodel} with $n_0$ set to
the cladding refractive index, $n_1$ set to the (constant) core index,
and $R_0$ set to the core radius. We nondimensionalize by
setting the length scale to the {\em core radius}
$
  L = R_0
$
and obtain~\eqref{eq:non-dim-form} with $\Rh_0=1$ and 
\[
  V = \left\{
    \begin{aligned}
      & -V_1^2,  && \rh \le 1\\
      & 0,  && \rh > 1,
    \end{aligned}
  \right.
\]
where $V_1^2 = R_0^2k^2(n_1^2 - n_0^2)$ is often called the {\em
  normalized frequency} (or sometimes,  the ``V-number'')
\cite{Reide16} of the step-index fiber.  The
fiber core region has now been transformed into the nondimensional
unit disk $\rh \le 1$. Using the standard interface transmission
conditions for the Helmholtz equation, we may
rewrite~\eqref{eq:non-dim-form} as the system
\begin{subequations}
  \label{eq:step-index-nondim}
  \begin{align}
    \label{eq:step-index-nondim-X-core}
    \hat\Delta \uh + X^2 \uh & = 0    && \rh < 1,
    \\
    \label{eq:step-index-nondim-Z-clad}
    \hat\Delta \uh + Z^2 \uh & = 0    && \rh > 1,
    \\
    \label{eq:step-index-nondim-interface}
  \jmp{\uh} = \jmp{\d\uh/\d\rh} & = 0 && \rh = 1,
\end{align}  
\end{subequations}
with $X^2 = V_1^2 + Z^2$.  In~\eqref{eq:step-index-nondim-interface},
the notation $\jmp{v}$ indicates the jump (defined up to a sign) of a
function $v$ across the core-cladding interface $\rh=1$.
We proceed to analytically solve for the
general form of solutions of the first two equations and then match
them by the third equation.

\begin{figure}
  \centering
  \begin{subfigure}[t]{0.5\textwidth}
    \begin{tikzpicture}
      \begin{groupplot}
        [
        width=0.75\textwidth,
        height=0.3\textheight,
        group style={
          group size=1 by 2,
          vertical sep=5pt,
        }
        ]

        \nextgroupplot
        [
        ylabel={$\Im(\beta)$},
        y label style={at=(ticklabel* cs:0.1), yshift=30},
        ymin=1000,
        xmin=8520000, xmax=8570000, 
        axis x line=top,
        xticklabels={,,},    
        xtick scale label code/.code={}, 
        axis y discontinuity=parallel,
        height=5cm,
        x axis line style=-,
        ytick={5000, 10000},
        yticklabels={5000, 10000},
        ytick scale label code/.code={}, 
        legend style={at={(1.3, 0.3)}, anchor=south, font=\footnotesize},
        ]
        
        \addplot
        [
        scatter,
        only marks,
        point meta=explicit symbolic,
        scatter/classes={
          g={mark=star, black},%
          l0={mark=*, red!50, draw=black},%
          l1={mark=halfcircle*, red, draw=black},
          l2={mark=otimes*, green, draw=black},
          l3={mark=oplus*, blue!40, draw=black}
        },
        ]
        table[x=x, y=y,  meta=label] {py/stepindexews.dat};

        \legend{
          guided,
          leaky ${l} =0$,
          leaky ${l} =1$,
          leaky ${l} =2$,
          leaky ${l} =3$
        }
        
        \nextgroupplot
        [
        ymin=-100,ymax=500,
        xmin=8520000, xmax=8570000, 
        ytick={0, 200, 400},
        minor ytick={100, 300},
        axis x line=bottom,
        xlabel={$\Re(\beta)$}, 
        height=3cm,
        x axis line style=-,
        yminorgrids=true,
        ymajorgrids=true,
        grid=both,
        ]
        
        \addplot    
        [
        scatter,
        only marks,
        point meta=explicit symbolic,
        scatter/classes={
          g={mark=star, black},%
          l0={mark=*, red!50, draw=black},%
          l1={mark=halfcircle*, red, draw=black},
          l2={mark=otimes*, green, draw=black},
          l3={mark=oplus*, blue!40, draw=black}
        },
        ]
        table[x=x, y=y,  meta=label] {py/stepindexews.dat};


      \end{groupplot}
    \end{tikzpicture}
    \subcaption{Physical propagation constants $\beta$}
    \label{subfig:physical}
  \end{subfigure}    
  \begin{subfigure}[t]{0.45\textwidth}
    \begin{tikzpicture}
      \begin{axis}
        [
        width=\textwidth,
        height=0.3\textheight,
        xlabel={$\Re(Z)$},
        ylabel={$\Im(Z)$},
        y label style={xshift=-40},
        ymax=1.2, ymin=-3,
        xmin=0,
        axis x line=middle,
        xticklabel style={anchor=south},
        xtick={0, 4,  8},
        axis y line=left,
        grid=both,
        ]

        \addplot
        [
        scatter,
        only marks,
        point meta=explicit symbolic,
        scatter/classes={
          g={mark=star, black},%
          l0={mark=*, red!50, draw=black},%
          l1={mark=halfcircle*, red, draw=black},
          l2={mark=otimes*, green, draw=black},
          l3={mark=oplus*, blue!40, draw=black}
        },
        every inner x axis line/.append style={-stealth},
        ]
        table[meta=label] {
          x                     y                     label
          5.351835174490589    -1.334942821742271     l0
          9.061909860015561    -1.6380419749437216    l0
          12.471958660800317   -1.8735366508309028    l0
          
          2.903324474874013    -1.1019639101931995    l1
          7.20192238202425     -1.492413782571426     l1
          10.73969426521824    -1.7614196413353773    l1

          0.3044351516911604   -1.0376997172549105    l2
          12.308875069330629   -1.8719715580460672    l2
          4.949838513024925    -1.2780715576847275    l2
          8.83440294025797     -1.6288332276862594    l2

          1.9577933269202556   -0.18543240054910468   l3
          6.586842343191748    -1.4379007252654379    l3
          10.348866263178596   -1.7522310599718856    l3
        };

        \draw[->, dotted]
        (1.9577933269202556, -0.18543240054910468)--(2.7, 0.6)
        node[above right]       
        {\tiny{
            \hspace{-1.2cm}
            (Fig.~\ref{subfig:intensities-step-index} shows the
            modes located here.)}};
      \end{axis}
    \end{tikzpicture}
    \subcaption{Nondimensional eigenvalues $Z$\label{subfig:Z}}
  \end{subfigure}
  \caption{Locations of nondimensional eigenvalues (right) yielding
    some fiber modes and their corresponding $\beta$-values
    (left).}
  \label{fig:ew-leaky-guided} 
\end{figure}
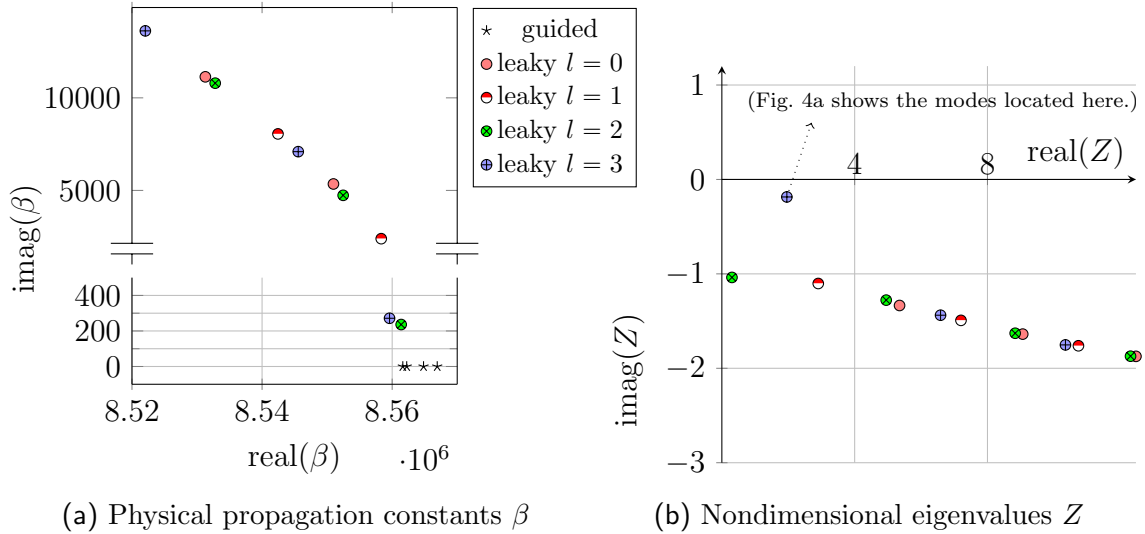

By separation of variables, the solutions
of~\eqref{eq:step-index-nondim-X-core}
and~\eqref{eq:step-index-nondim-Z-clad} take the form
$C_1 \rho_{l} (X\rh) e^{\ii {l}  \theta}$ and
$C_0 \rho_{l} (Z\rh) e^{\ii {l}  \theta},$ respectively, where
$\rho_{l} (s)$ satisfies the Bessel equation
$s^2 d^2\rho_{l} /ds^2 + s d \rho_{l} /ds + (s^2 - {l} ^2)
\rho_{l} =0$. Since the solution must be finite at $\rh=0$ and
outgoing as $\rh \to \infty$, we pick $\rho_{l} $ to be $J_{l} $ and
$H_{l} ^{(1)},$ respectively, in the core and cladding regions. Thus
we obtain the following family of solutions
of~\eqref{eq:step-index-nondim-X-core}--\eqref{eq:step-index-nondim-Z-clad}
indexed by~${l} $:
\begin{equation}
  \label{eq:13}
  \uh(\rh, \theta) = \left\{
    \begin{aligned}
      & C_1 J_{l} (X \rh) e^{\ii {l}  \theta},  && \rh \le 1,\\
      & C_0 H_{l} ^{(1)} (Z \rh) e^{\ii {l}  \theta},  && \rh > 1.
    \end{aligned}
    \right.  
\end{equation}
Note that in the computation of guided modes (see
e.g.,~\cite[Chapter~5]{Reide16}) one chooses the exponentially
decaying Bessel solution $K_{l} $ in the $\rh>1$ region, but to
compute leaky modes, we must instead choose the outgoing Hankel
function, as done above.

The interface conditions of~\eqref{eq:step-index-nondim-interface} can
now be expressed as
\[
  T
	\begin{bmatrix}
		C_1	\\
		C_0	\\
	\end{bmatrix}
	=
	\begin{bmatrix}
		0	\\
		0	\\
              \end{bmatrix},
              \text{ where }
              T = 
	\begin{bmatrix}
		J_{{l} }(X) & -H_{{l} }^{(1)}(Z)	\\
		X J_{{l} }'(X) & -Z(H_{{l} }^{(1)})'(Z)	\\
	\end{bmatrix}.
\]
Nontrivial solutions are obtained when
$\det T = -ZJ_{{l} }(X)(H_{{l} }^{(1)})'(Z) + X
J_{{l} }'(X)H_{{l} }^{(1)}(Z) $ is zero.  Using the 
well-known~\cite{AbramStegun72} Bessel identities
$J_{{l} }'(z) = ({l} /z) J_{{l} }(z) - J_{{l}  + 1}(z)$ and
$(H_{{l} }^{(1)})'(z) = ({l} /z) H_{{l} }^{(1)}(z) - H_{{l}  +
  1}^{(1)}(z),$  the determinant simplifies to
$\det T $ $= Z J_{l} (X) H_{{l} +1}^{(1)}(Z)$ $- X J_{{l} +1}(X)
H_{l} ^{(1)}(Z)$.  Substituting $X = (V_1^2 + Z^2)^{1/2}$, we conclude
that the eigenvalues $Z \in \C$ are zeros of the function
\begin{equation}
  \label{eq:32}
  f(Z)
  =
  Z J_{{l} }\!\left((V_1^2 + Z^2)^{1/2}\right) H_{{l}  + 1}^{(1)}(Z)
  -
  (V_1^2 + Z^2)^{1/2}
  J_{{l}  + 1}\!\left( (V_1^2 + Z^2)^{1/2}\right) H_{{l} }^{(1)}(Z).  
\end{equation}
Once such a $Z$ is found for an integer ${l} $ (there are usually many
for a single ${l} $-value---see Figure~\ref{fig:ew-leaky-guided}), say $Z_{l} $, the
corresponding nondimensional leaky mode is obtained (up to a scalar
factor) by setting $C_1 = H_{{l} }^{(1)}(Z_{l} )$ and
$C_0 = J_{{l} }(X_{l} )$ in~\eqref{eq:13}. Hence the corresponding
physical mode ($u$) and propagation constant ($\beta$) are given using
 $Z_{l} $, by
\begin{equation}
  \label{eq:14}
  \begin{aligned}
  \beta & = (k^2n_0^2 - (Z_{l} /R_0)^2)^{1/2},
  \\
  X_{l}  & = (V_1^2 + Z_{l} ^2)^{1/2},
 \end{aligned}
 \qquad 
 u(r, \theta)
 =
 \begin{cases}
   H_{{l} }^{(1)}(Z_{l} ) J_{{l} }(X_{l}  r/R_0) e^{i{l} \theta}, & r \le R_0,
   \\
   J_{{l} }(X_{l} ) H_{{l} }^{(1)}(Z_{l}  r/R_0) e^{i{l} \theta}, & r > R_0.
 \end{cases}  
\end{equation}
This is the exact solution that will be the basis of our verification.

To proceed, we choose the parameters of a commercially available
ytterbium-doped, step-index fiber (detailed
in~\cite{GopalGrubiOvallParker20}, where its guided modes were
computed by solving a linear selfadjoint eigenproblem).  The fiber has
a core radius of $R_0=12.5 \times 10^{-6}$~m, core index $n_1 = 1.45097$, and
cladding index $n_0=1.44973$.  The typical cladding radius of this
fiber is $16 R_0$ and the typical operating wavelength is 1064~nm, so
we set $k = 2 \pi/1.064 \times 10^6 \: \mathrm{m}^{-1}.$ We computed
several roots of $f$ for various ${l} $ in high precision using
standard root finding methods~\cite{KravaBarel00}.  They are shown in
Figure~\ref{fig:ew-leaky-guided} (where Figure~\ref{subfig:physical}
also shows the locations of the guided mode eigenvalues
from~\cite{GopalGrubiOvallParker20} on the real line for reference).
Next, we apply Algorithm~\ref{alg:polyfeast} to solve the discrete
cubic eigenproblem~\eqref{eq:discreteCubicEVP} and cross-verify the
results obtained with the above-mentioned root-finding approach.

\begin{figure}
  \centering
  \includegraphics[width=0.8\linewidth]{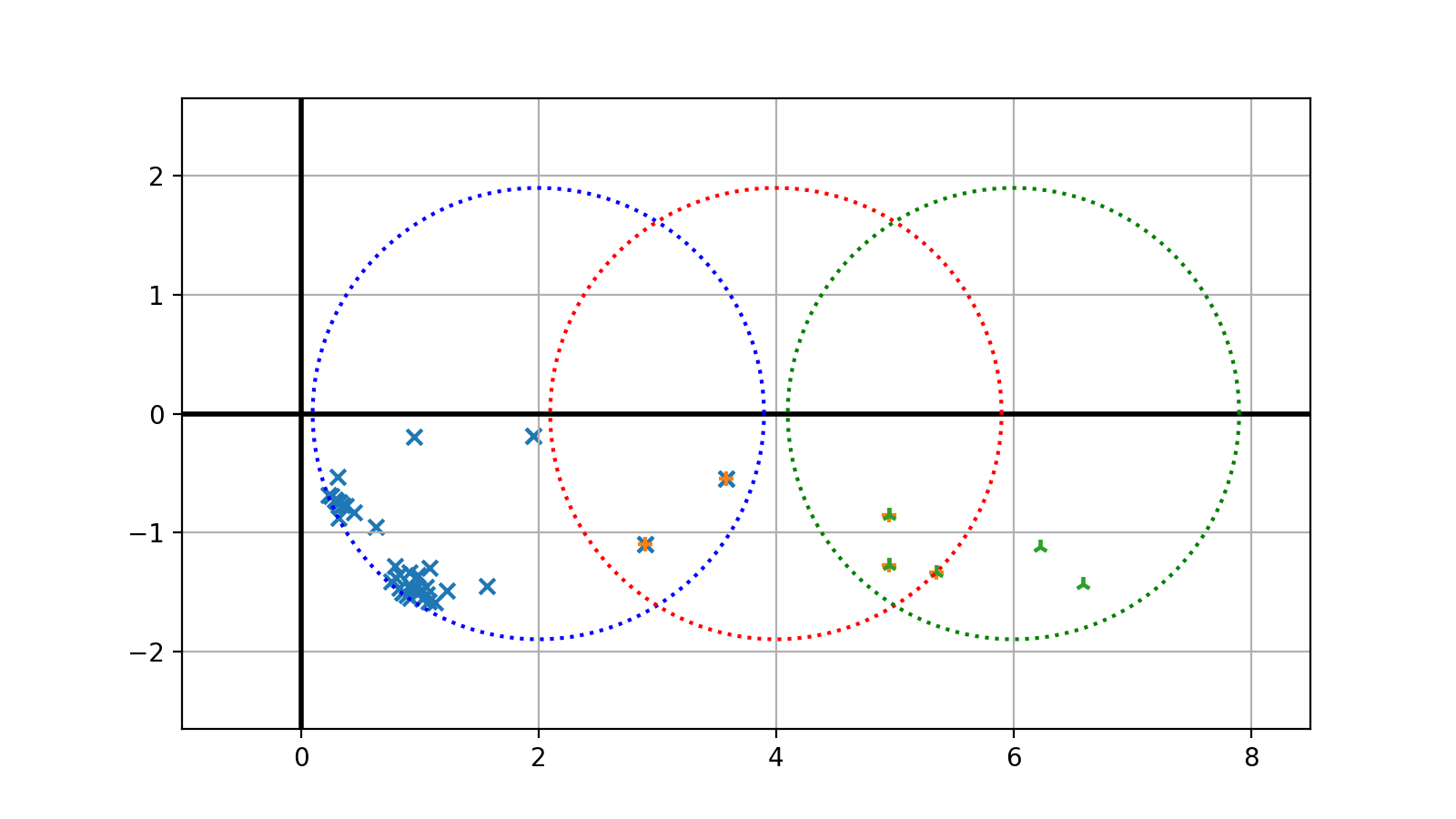}
  \caption{Three collections of Ritz values (marked in different
    styles) produced by ten iterations of
    Algorithm~\ref{alg:polyfeast} using three overlapping contours are
    shown in the complex plane (for $Z$). Numerous values in the first
    contour appear to have arisen from the essential spectrum.}
  \label{fig:search}
\end{figure}

We implemented Algorithm~\ref{alg:polyfeast} as an extension of the
open source finite element library NGSolve \cite{schoberl97,ngsolve}.
The computational parameters used in our numerical studies are
$\Rfin = 4$, $\Rh_0 = 1$ and $\Rh = 2$. One way to apply the algorithm
is to conduct a preliminary ``first search'' using a relatively
large~$m$, large contours, and a fixed number of iterations.  One then
examines the resulting (unconverged) Ritz value locations, identifies the
desired ones by viewing their corresponding subspace iterates, designs
contours that zoom in to a desired eigenvalue location while
separating the remainder, and then runs Algorithm~\ref{alg:polyfeast}
again (with smaller $m$) until convergence.

The results of the above-mentioned first search for this problem are
pictorially illustrated in Figure~\ref{fig:search}. We used three
overlapping circular contours of centers $2, 4,$ and $6$, all of radii
$1.9$ and used the quadrature formula~\eqref{eq:zkwk} with $N=10$.
Setting $\alpha=1$, and $p=5$ we applied ten iterations
of Algorithm~\ref{alg:polyfeast} for each contour, starting with $m=50$
random vectors. The locations of the Ritz values from the algorithm
quickly stabilize, and those finally falling inside the respective
contours are shown in Figure~\ref{fig:search}. Near the southwest edge
of first circular contour, we found numerous Ritz values that appear
to have arisen from the essential spectrum of the undiscretized
operator. Four other values inside that circle are resonances that are
also zeros of the $f$ in~\eqref{eq:32} for some ${l} $. Two of them
coincided with the Ritz values found by the next contour. More values
can be found by adding further contours. Any one of these Ritz values
can now be found by using a tightened contour to find an
eigenvalue to better accuracy, and by running the algorithm again until
convergence. Finer discretizations can be used if needed.
\begin{figure}
  \centering 
	\begin{subfigure}{0.28\textwidth}
          \begin{tikzpicture}
            \node[anchor=south west,inner sep=0] (image) at (0,0) {\includegraphics[width=0.875\textwidth, trim=31.5em 1.5em 31.5em 5.25em, clip]{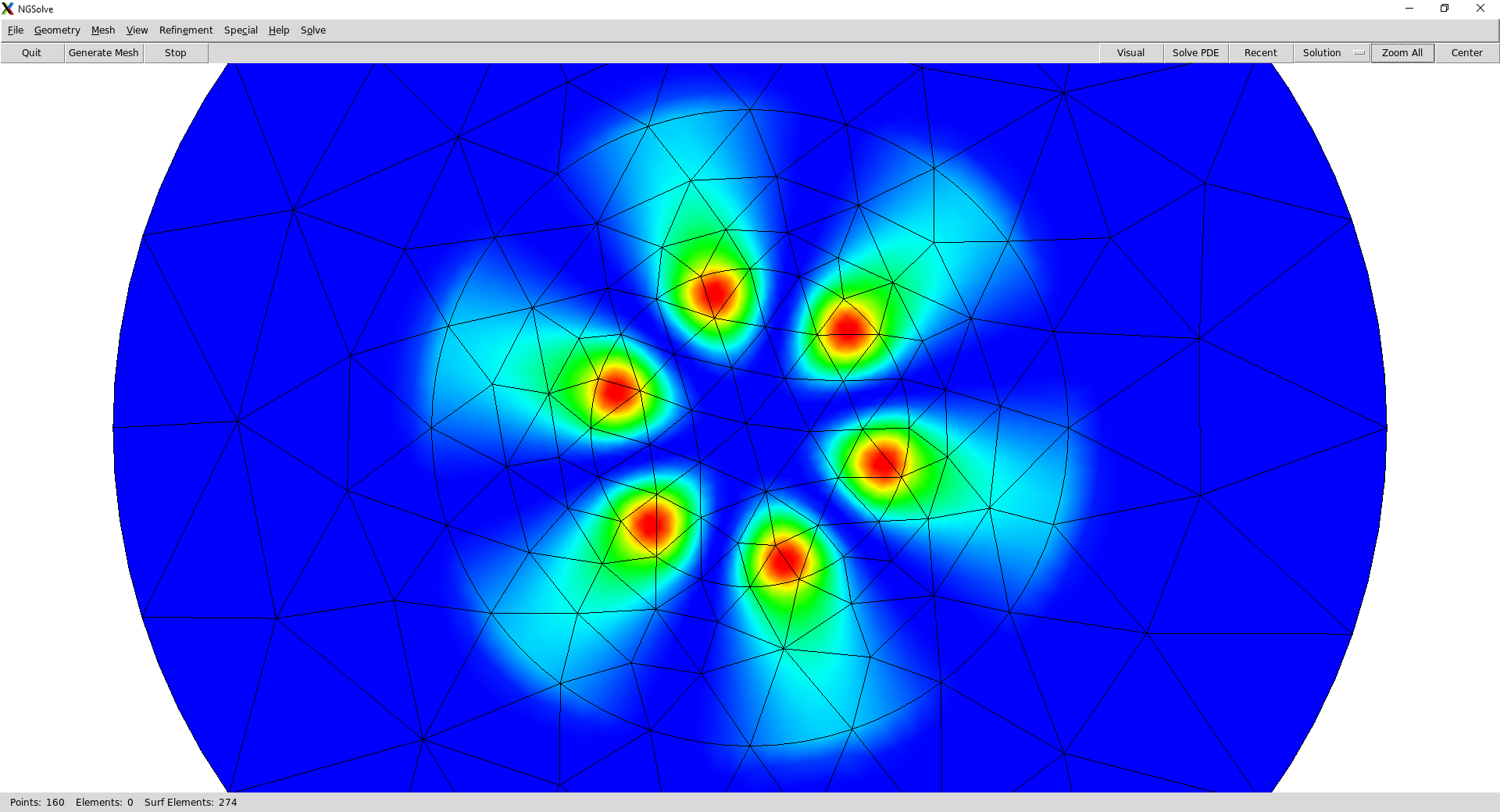}};
            \begin{scope}[x={(image.south east)},y={(image.north west)}]
              \draw[white, opacity=1.0, thick, dashed] (0.5,0.50125) circle (0.220625);
              \draw[black, opacity=1.0, thick, dashed] (0.5,0.50125) circle (0.43625);
            \end{scope}
          \end{tikzpicture}
          \begin{tikzpicture}
            \node[anchor=south west,inner sep=0] (image) at (0,0) {\includegraphics[width=0.875\textwidth, trim=31.5em 1.5em 31.5em 5.25em, clip]{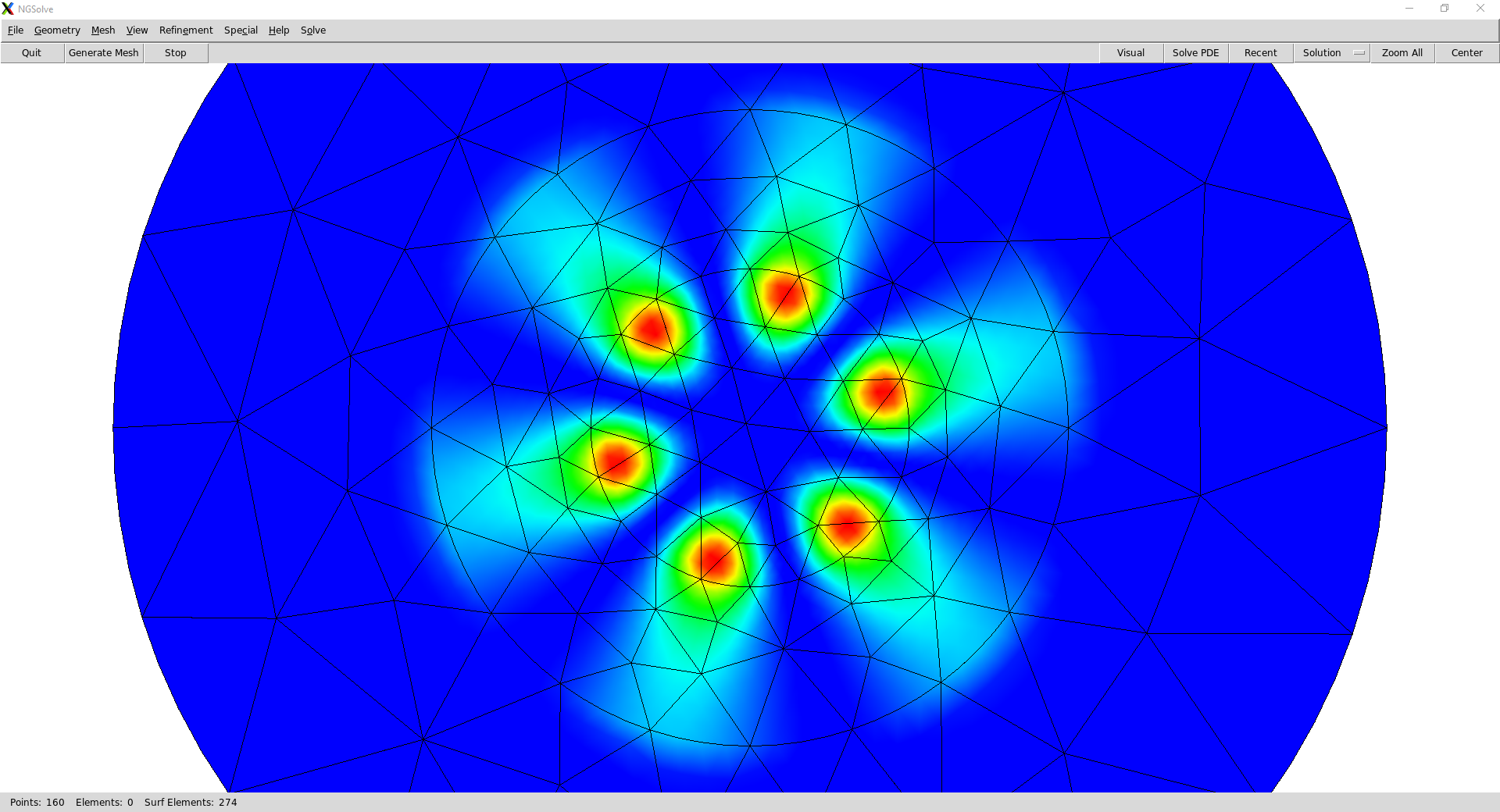}};
            \begin{scope}[x={(image.south east)},y={(image.north west)}]
              \draw[white, opacity=1.0, thick, dashed] (0.5,0.50125) circle (0.220625);
              \draw[black, opacity=1.0, thick, dashed] (0.5,0.50125) circle (0.43625);
            \end{scope}
          \end{tikzpicture}
          \subcaption{Mode intensities}
          \label{subfig:intensities-step-index}
	\end{subfigure}
        %
        %
        \begin{subfigure}{0.56\textwidth}
            \begin{tikzpicture}
              \begin{loglogaxis}[
                footnotesize,
                width=\textwidth,
                height=0.92\textwidth,
                xlabel={Degrees of Freedom},
                ylabel={$d({\Lambda}, {\Lambda_{hp}}) / |Z|$},
                legend pos=north east,
                max space between ticks=30pt,
                cycle list name=exotic
                ]

                \addplot coordinates {
                  (581, 3.32e-02)
                  (2245, 4.71e-03)
                  (8861, 3.14e-04)
                  (35245, 2.45e-05)
                  (140621, 1.69e-06)
                  (561805, 1.10e-07)
                };
                
                \addplot coordinates {
                  (1276, 5.49e-03)
                  (5005, 8.65e-05)
                  (19861, 4.06e-06)
                  (79165, 8.12e-08)
                  (316141, 1.27e-09)
                  (1263565, 2.36e-11)
                };
                
                \addplot coordinates {
                  (2245, 1.58e-04)
                  (8861, 1.43e-05)
                  (35245, 1.68e-07)
                  (140621, 5.06e-10)
                  (561805, 1.78e-12)
                };
                
                \addplot coordinates {
                  (3488, 7.44e-05)
                  (13813, 1.72e-06)
                  (55013, 3.30e-09)
                  (219613, 8.24e-13)
                  (877613, 1.22e-14)
                };
                

                \addplot[color=black] coordinates {
                  (140621, 3.38e-06)
                  (561805, 2.1125e-07)
                } node[midway, above] {$4$};

                \addplot[color=black] coordinates {
                  (79165, 1.624e-07)
                  (316141, 2.5375e-09)
                } node[midway, above] {$6$};

                \addplot[color=black] coordinates {
                  (140621, 1.012e-09)
                  (561805, 3.953125e-12)
                } node[midway, above] {$8$};

                \addplot[color=black] coordinates {
                  (13813, 8.6e-07)
                  (55013, 8.3984375e-10)
                } node[left, midway] {$10$};

                \legend{$p=2$, $p=3$, $p=4$, $p=5$}
              \end{loglogaxis}
            \end{tikzpicture}
            \subcaption{Convergence of corresponding eigenvalues}
            \label{subfig:stepindewf}
          \end{subfigure}
          \caption{
            {\em Left}~(\ref{subfig:intensities-step-index}):
            Intensities of computed step-index leaky modes
            corresponding to two eigenvalues in $\vL_{hp} = \{ 
          Z_{hp}^{(1)}, Z_{hp}^{(2)}\}$ are shown. The white and dark dashed
          curves indicate the core-cladding interface and the start of
          the PML, respectively.
          {\em Right}~(\ref{subfig:stepindewf}): Log-scale plot of the
          distance between exact and approximate eigenvalue
          cluster $\vL_{hp}$
          for polynomial degrees $p =
          2,\ldots, 5$ and  uniform mesh refinements.
        }
\end{figure}

We focus on one of these Ritz values near $c=1.9-0.2\ii$ for further
investigation using a tightened contour.  Setting $\alpha = 8$, $m=5$,
and a circular contour of radius $10^{-1}$ centered around $c$, we run
the algorithm again until convergence. The intensities of the two
modes that Algorithm~\ref{alg:polyfeast} found can be seen in
Figure~\ref{subfig:intensities-step-index}, which plots the square
moduli of two eigenfunctions in the eigenspace resulting from a higher
order computation with $p=10$ on the coarsest mesh. In addition to the
intensity pattern, the curved mesh elements, used to closely
approximate the circular core-cladding interface, as well as the fast
decay of the solution into the PML region are also visible in the same
figure.  The exact eigenvalue $Z$, marked in
Figure~\ref{fig:ew-leaky-guided}, is the one fairly close to the real
axis in the case ${l} =3$, namely $Z \approx 1.957793-0.185432\ii$.
The corresponding physical propagation constant, given by the formula
in~\eqref{eq:14} (also marked in the Figure~\ref{subfig:physical})
is $\beta \approx 8559596.699 + 271.443\ii$. Note that the mode
loss determined by the imaginary part of this $\beta$ is very large,
indicating that this mode is practically useless for guiding energy in
the fiber. Nonetheless, it is a reasonable choice for the limited
purpose of verifying that our numerical method reproduces an
analytically computable leaky mode.

For convergence studies, we repeat the above solution procedure on a
coarse mesh and on its successive refinements. Each refinement is
obtained by connecting the midpoints of the edges of the elements in
the current mesh.  We experiment with polynomial degrees
$p=2,\ldots,5$ in the
discretization~\eqref{eq:discreteCubicEVP}, setting $\alpha=8$. The initial mesh size
$h_0$ corresponds to a coarse mesh with about six elements across the
core (of unit non-dimensional radius) and rapidly increasing element
diameters outside of the core; part of this mesh is visible in
Figure~\ref{subfig:intensities-step-index}. 
In every case, the eigenvalue solver returned a
two-dimensional eigenspace for the contour around $c = 1.9-0.2\ii$.
The computed cluster of Ritz values
$\vL_{hp}$ often contained two distinct numbers near the single
exact $Z$ value, which we enumerate as
$\vL_{hp} = \{ Z_{hp}^{(1)}, Z_{hp}^{(2)}\}$.  Since the exact
eigenvalue cluster  $\vL$ is a
singleton in this case,
the Hausdorff distance between the clusters, denoted by
$d(\vL, \vL_{hp})$, reduces to $\max_{i=1,2} |Z - Z_{hp}^{(i)}|$.  This
distance is normalized by $|Z|$, and the values of $d(\vL, \vL_{hp}) / |Z|$ from
computations using various $h$ and $p$ values are shown graphically in
Figure~\ref{subfig:stepindewf}.

To conclude this verification, observe from
Figure~\ref{subfig:stepindewf} that the eigenvalue errors appear to
approach zero at the rate $O(h^{2p})$. (In the figure, the reference
black lines indicate $O(h^{2p})$ and the values of the corresponding
exponent $2p$  are marked alongside.)  The error values below $10^{-13}$
in Figure~\ref{subfig:stepindewf} are likely not reliable since the
expected errors in the semi-analytically computed value of $Z$ are
also in that neighborhood.

\section{A microstructured fiber}
\label{sec:microfiber}

In this section, we compute the transverse leaky modes of a hollow
core microstructured optical fiber. The microstructure we consider
appear to be known in the folklore to pose severe computational
challenges for PML, although the difficulties are seldom spelled out
in the literature. We proceed to present our computational experience
with Algorithm~\ref{alg:polyfeast} in some detail with the hope that
it may serve as a base for substantive numerical comparisons and
further advancement in numerical methodologies.

The geometry of the fiber is shown in Figure~\ref{fig:microgeom} and
is based on the details given in~\cite{Polet14}.
It consists of six symmetrically placed
thin glass capillaries, of thickness $\tc$, intersecting an outer
glass cladding region, the intersection being characterized by the
embedding distance $\ec$ shown in Figure~\ref{fig:geom-zoom}. The
capillaries and the cladding together form the shaded region in
Figure~\ref{fig:geom}, which we denote by the subdomain $\omsi$.  Let
$\nsi$ and $\nair$ denote the refractive indices of glass and air,
respectively.  The region surrounded by the capillary tubes is the
hollow core region where one would like to guide light.
The mode computation fits within the previously described
model~\eqref{eq:fibermodel}, with $R_0$ as marked in
Figure~\ref{fig:geom}, $n_0 = \nair$, 
and $n_1$ given by the piecewise constant function
\[
  n_1(x) =
  \begin{cases}
    \nsi, & x \in \omsi, \\
    \nair, & x \notin \omsi.
  \end{cases}
\]
In our numerical study, the parameter values  (see
Figure~\ref{fig:geom}) are as follows: $\nair =1.00028$,
$\Rcore = 15 \times 10^{-6}$m,
$R_0=60.775 \times 10^{-6}$m,
$\tclad=10^{-5}$m, 
$\tc = 0.028 \Rcore$, $\ec = 0.001\bar{6} \Rcore$, $\dc = 5 \tc$, $\Rci=0.832 \Rcore$,
and $\Rco=0.86 \Rcore$.
The wavenumber is set by $k = 2\pi/\lambda$, where
the wavelength in vacuum denoted by $\lambda$,
can take different values, two of
which considered below are 1000 and 1800 nanometers. At these two
wavelengths, the values of $\nsi$ are 1.44982 and 1.43882, respectively.
Setting the
characteristic length scale by $L=\Rcore$, we nondimensionalize the
eigenproblem, as described previously, to the
form~\eqref{eq:non-dim-form} and terminate the geometry at
$\Rout=7.385$ nondimensional units.

\begin{figure}
  \centering
  \includegraphics[width=0.9\linewidth]{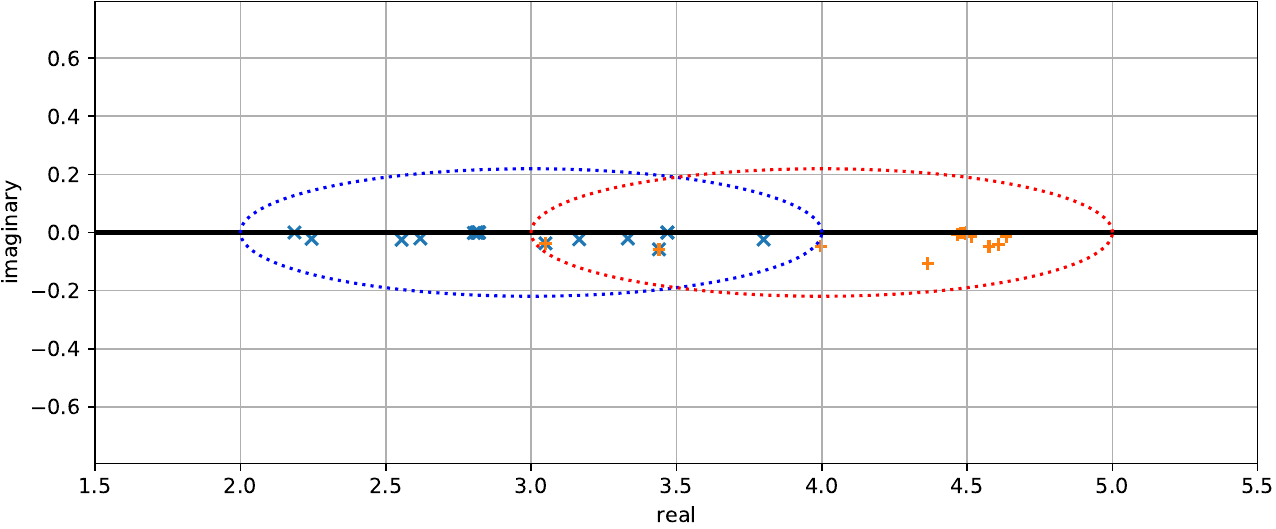}
  \caption{Ritz values found within two overlapping elliptical
    contours in the $Z$-plane for the microstructured fiber.
  }
  \label{fig:search-ellipse}
\end{figure}
\begin{figure}
  \centering
  \begin{subfigure}{0.3\linewidth}
  \includegraphics[trim=350  220    350   290, clip, width=\textwidth]
  {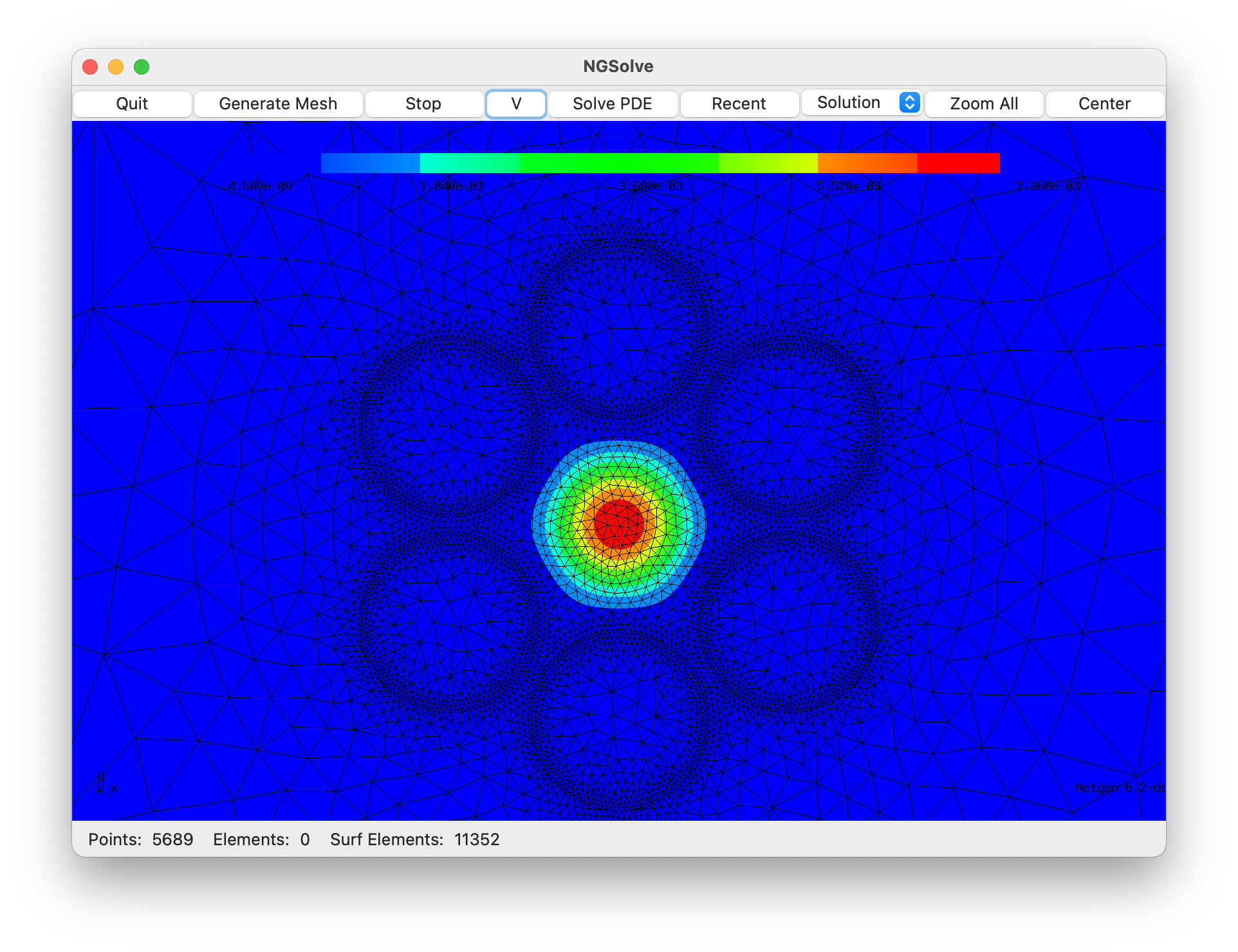}
    \subcaption{$2.186-\ii\, 2.1\times 10^{-6}$}
    \label{subfig:LP01}
  \end{subfigure}
  \begin{subfigure}{0.3\linewidth}
  \includegraphics[trim=350  220    350   290, clip, width=\textwidth]
  {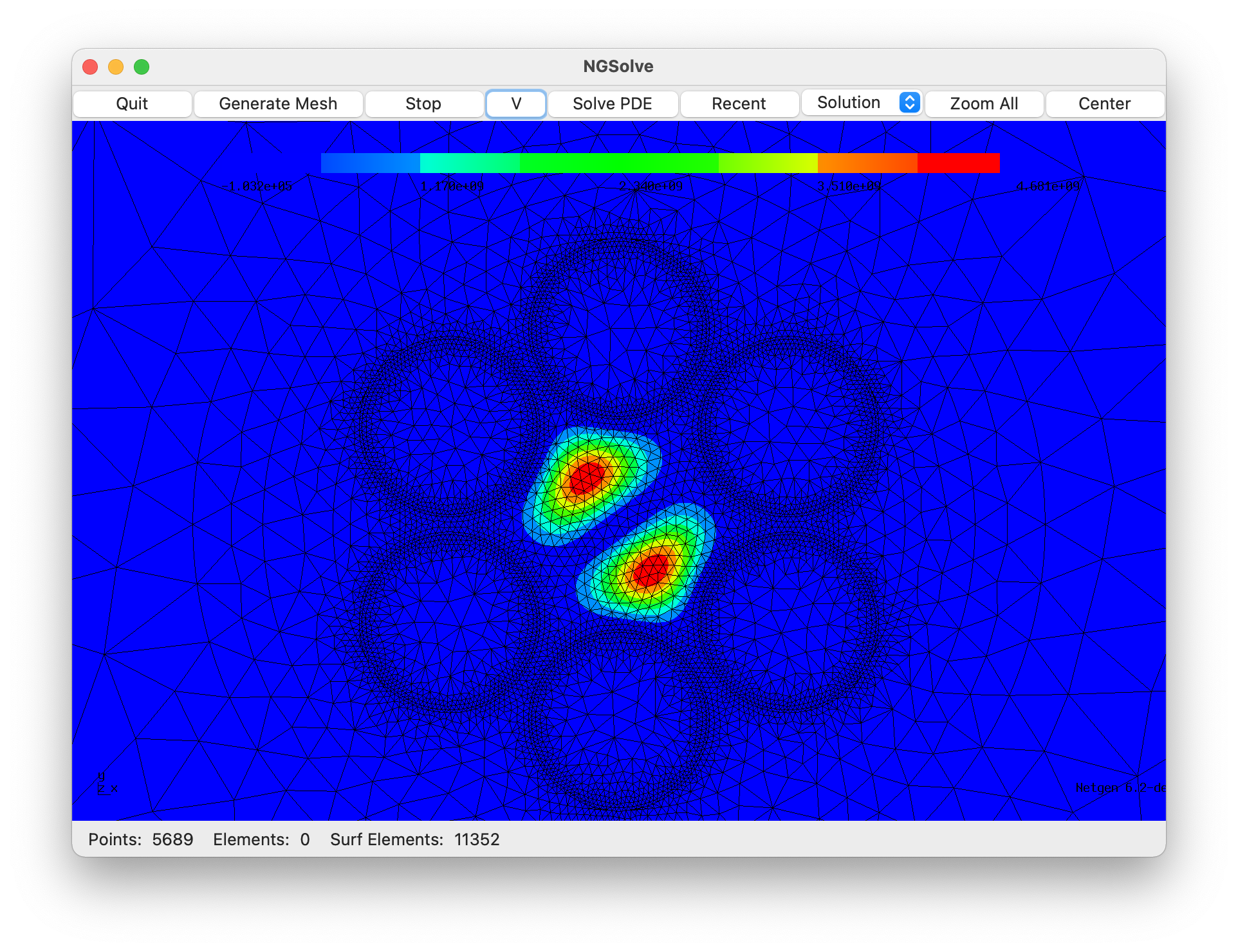}
  \subcaption{$3.469-\ii \,7.0\times 10^{-5}$}
    \label{subfig:LP11a}    
  \end{subfigure}
  \begin{subfigure}{0.3\linewidth}
  \includegraphics[trim=350  220    350   290, clip, width=\textwidth]
  {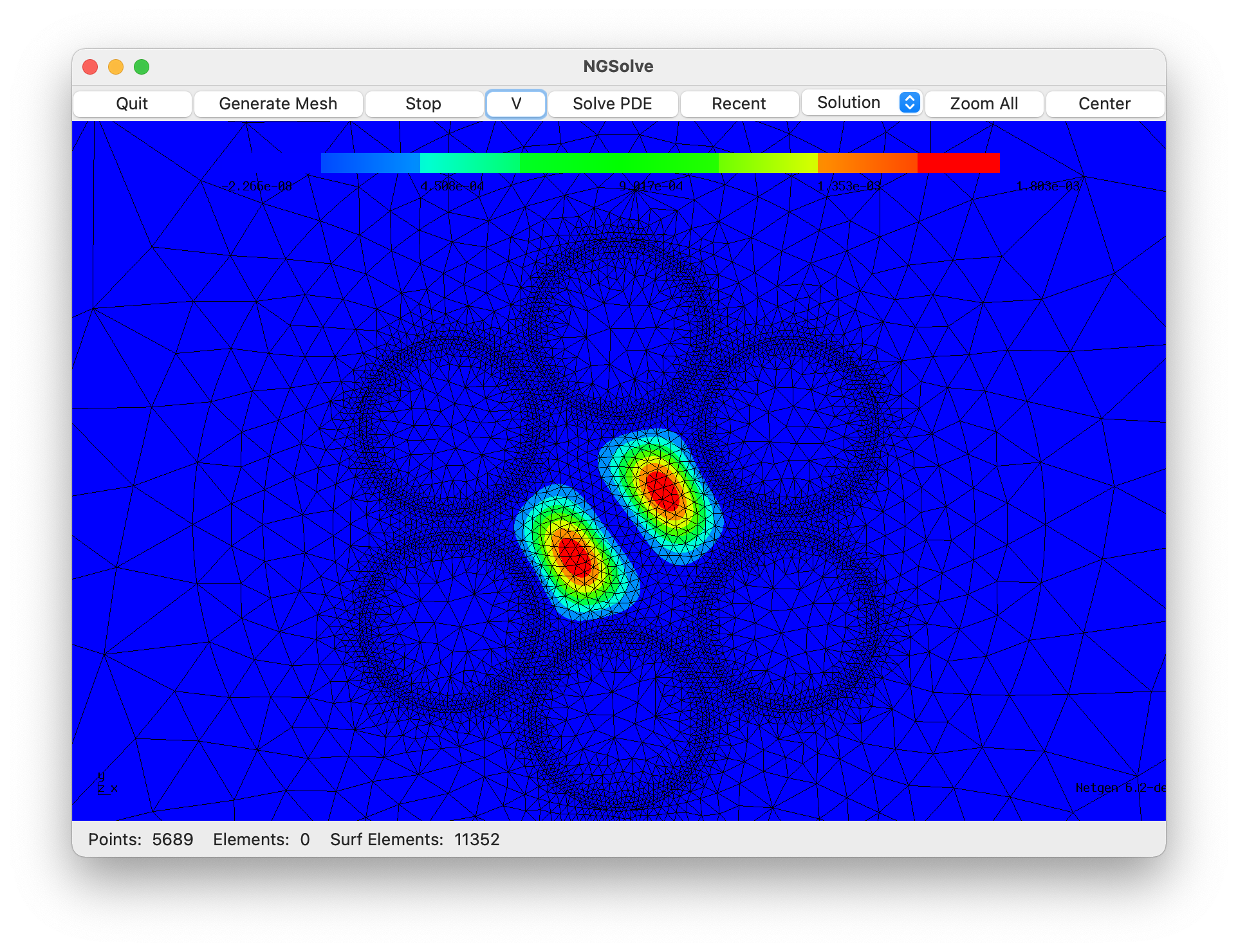}
    \subcaption{$3.469-\ii 6.4\times 10^{-5}$}
    \label{subfig:LP11b}    
  \end{subfigure}  
  \begin{subfigure}{0.3\linewidth}
  \includegraphics[trim=350  220    350   290, clip, width=\textwidth]
  {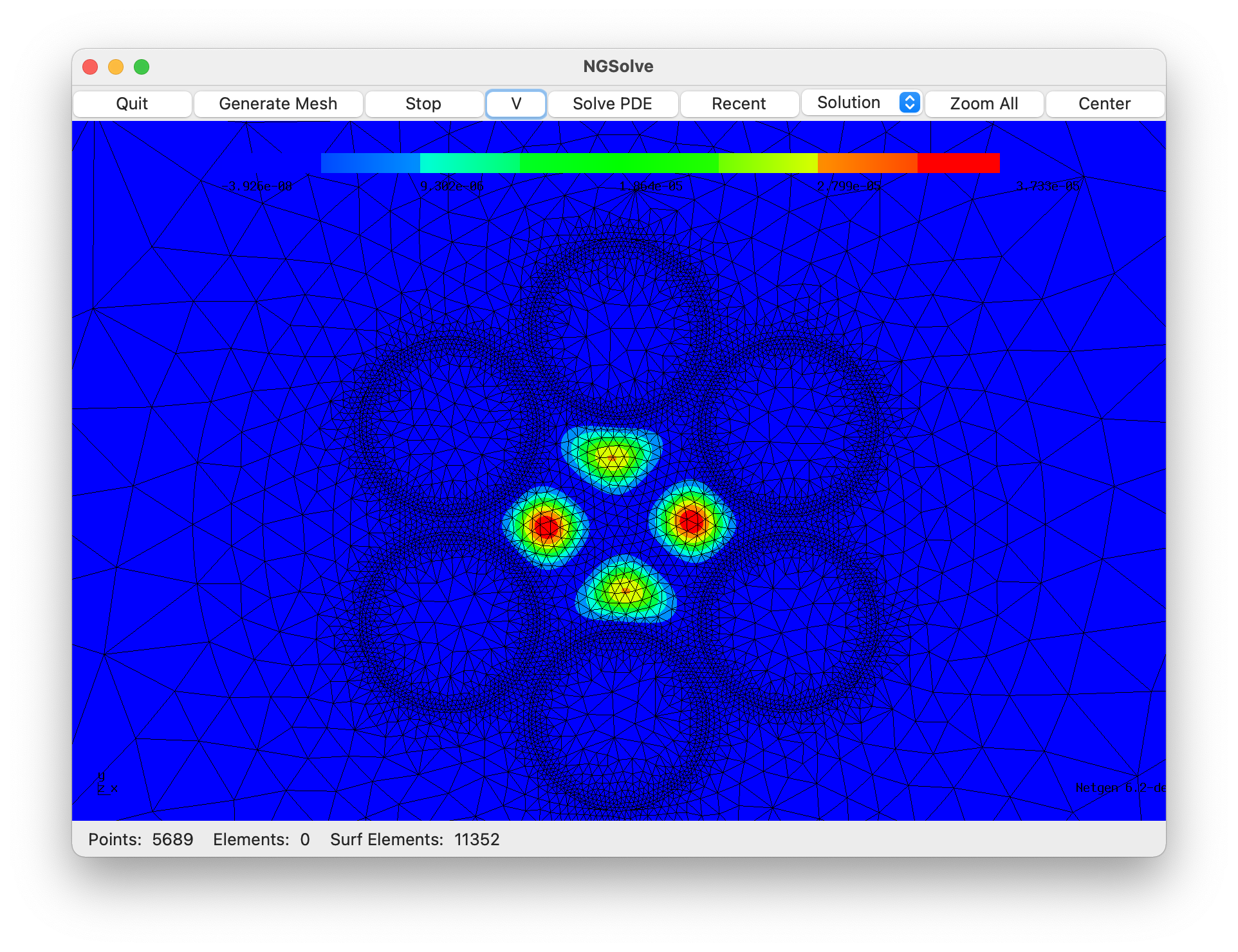}
    \subcaption{$4.637-\ii\,2.4\times 10^{-3}$}
    \label{subfig:LP21a}
  \end{subfigure}       
  \begin{subfigure}{0.3\linewidth}
  \includegraphics[trim=350  220    350   290, clip, width=\textwidth]
  {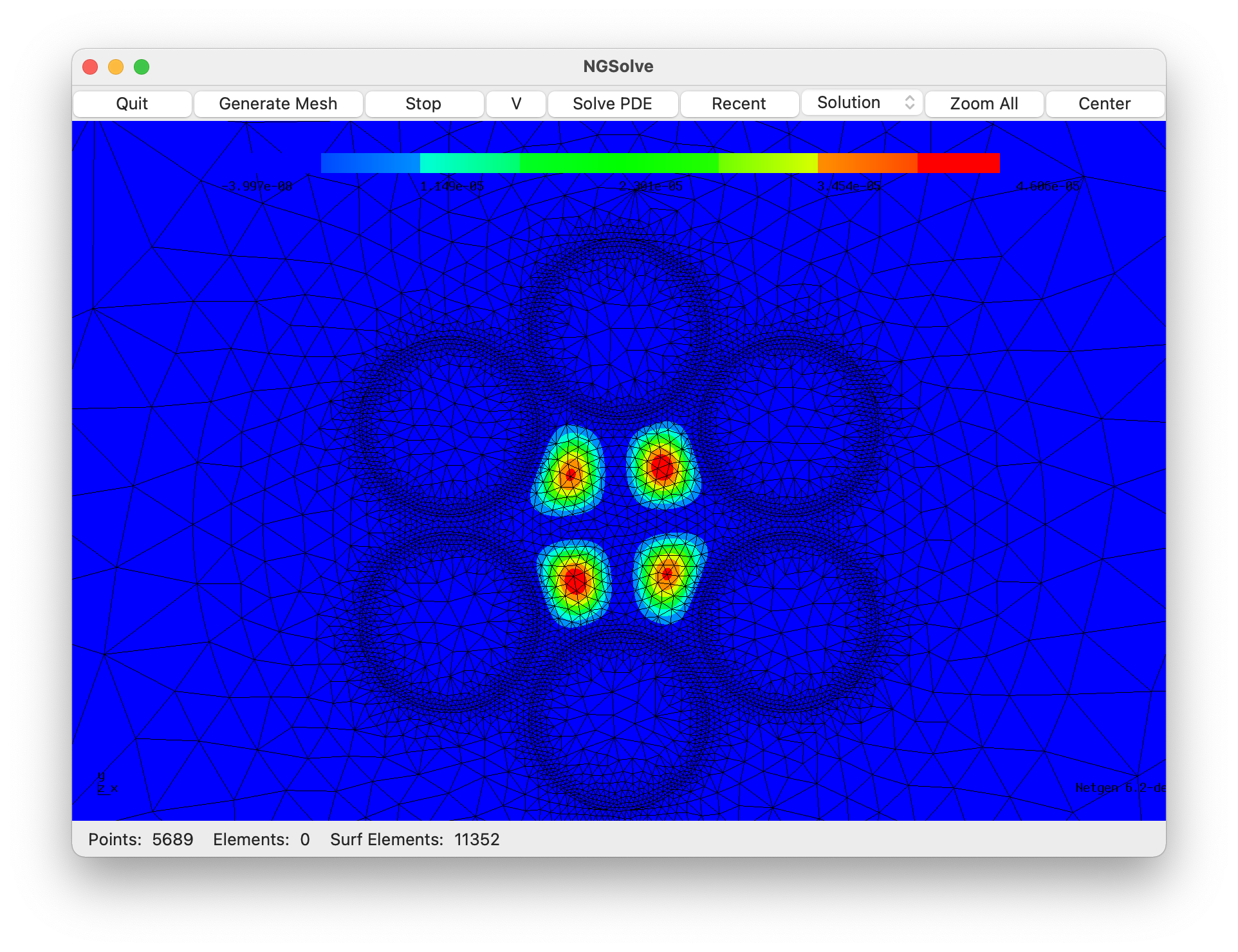}
    \subcaption{$4.637-\ii\,2.2\times 10^{-3}$}
    \label{subfig:LP21b}    
  \end{subfigure}
  \begin{subfigure}{0.3\linewidth}
  \includegraphics[trim=350  220    350   290, clip, width=\textwidth]
  {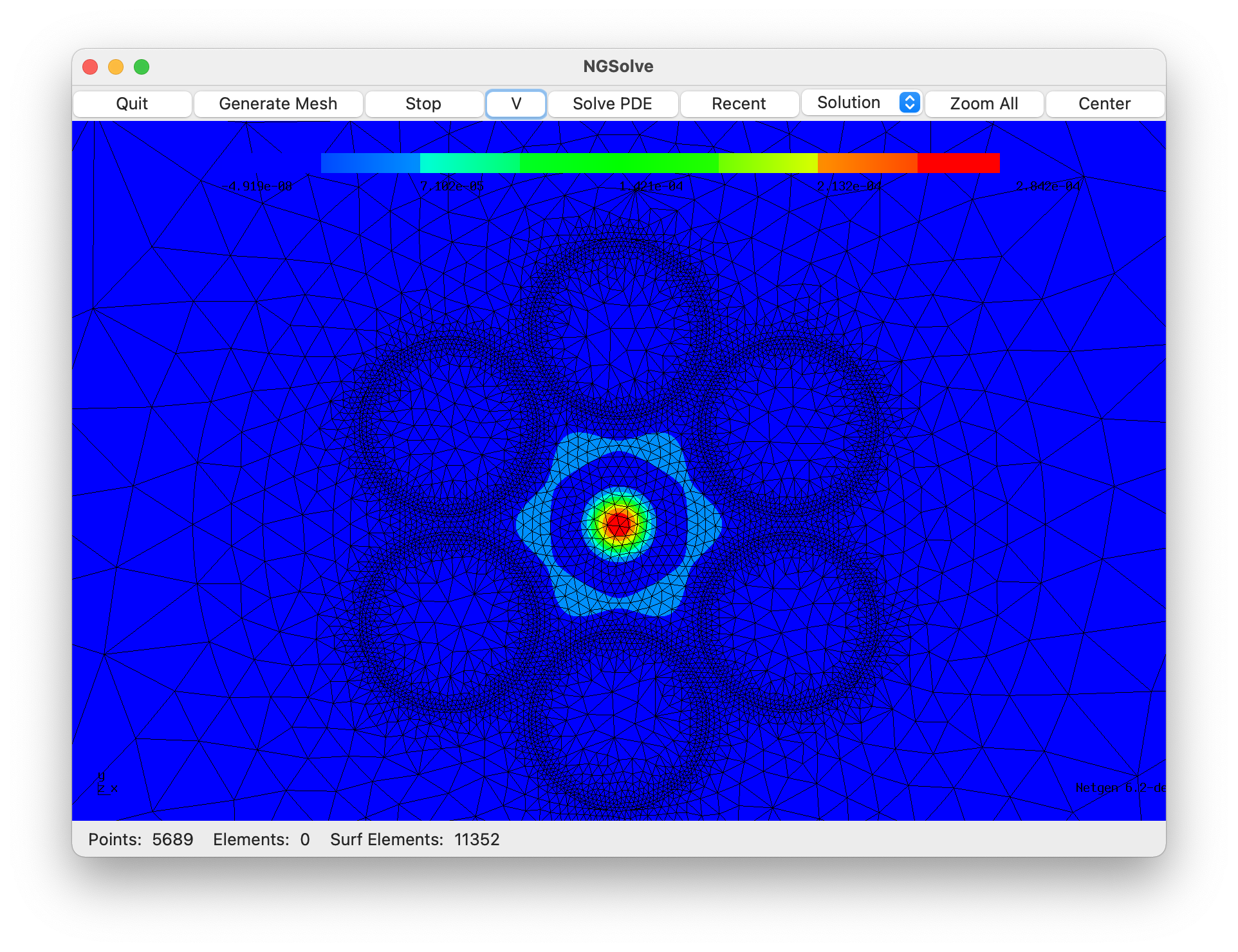}
    \subcaption{$4.961-\ii\,8.6\times 10^{-4}$}  
    \label{subfig:LP21b}    
  \end{subfigure}
  \caption{Intensities of computed modes are  shown zoomed into a
    rectangle covering the hollow core (the region $r<\Rcore$ of
    Figure~\ref{fig:microgeom}), labeled with their approximate
    nondimensional $Z$ values for $\lambda=10^{-6}$m.}
  \label{fig:modes_poletti}
\end{figure}

\begin{figure}
  \centering
  \begin{subfigure}{0.3\linewidth}
    \includegraphics[trim=285  100    285   180, clip, width=\textwidth]
     {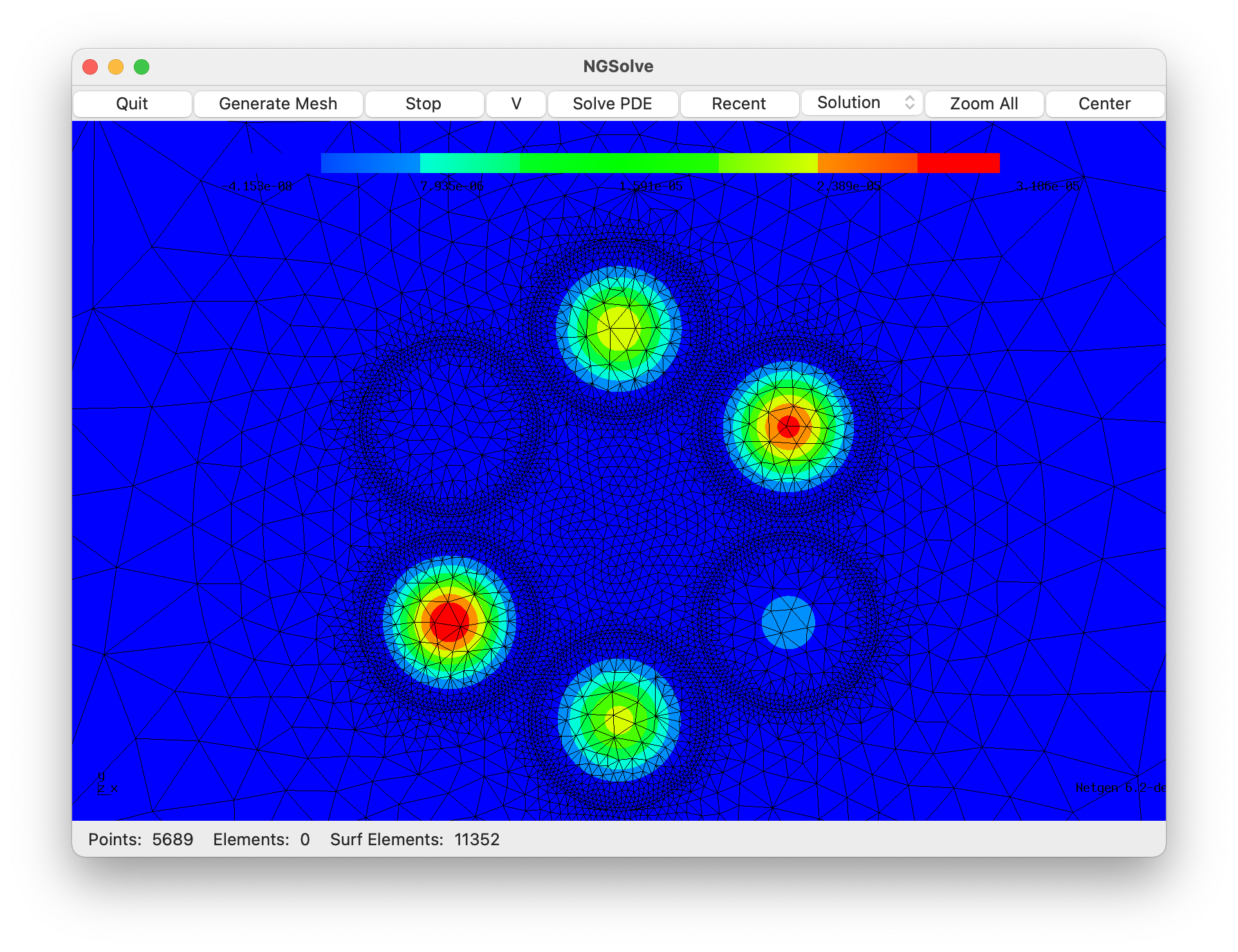}
     \subcaption{$2.802-\ii\,2.4\times 10^{-3}$}
     \label{subfig:cap00}    
  \end{subfigure}
  \begin{subfigure}{0.3\linewidth}
    \includegraphics[trim=285  100    285   180, clip, width=\textwidth]
    {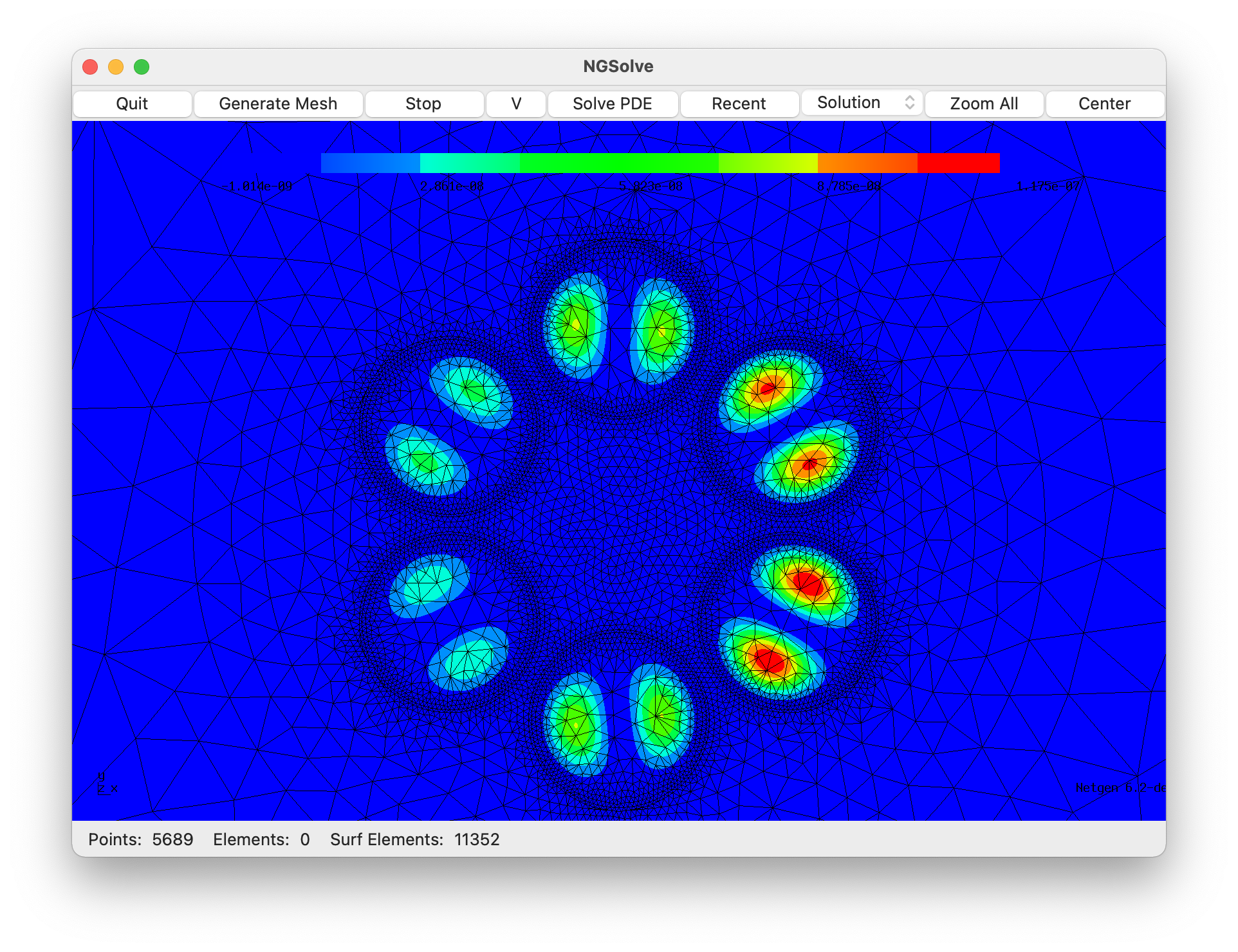}
    \subcaption{$4.481-\ii\,7.4\times 10^{-3}$}
     \label{subfig:cap01}    
  \end{subfigure}
  \caption{Examples of high-loss non-core modes observed in
    computations are shown. Six modes similar to that in
    Fig.~\ref{subfig:cap00} and twelve modes similar to that in
    Fig.~\ref{subfig:cap01} were observed.}
  \label{fig:highlossmodes}
\end{figure}

Next, we apply the method verified in Subsection~\ref{ssec:verif}. As
described there, we first apply Algorithm~\ref{alg:polyfeast} to
conduct a preliminary search, followed by further runs to obtain
accurately converged eigenvalues.  Before we give the results of the
preliminary search, recall from Figure~\ref{fig:search} that
discretizations of the essential spectrum can seep into circular
contours close to the origin, wasting computational resources on
irrelevant modes.  In the current example, we show how to avoid this
using elliptical contours. Since the eigenvalues arising from the
essential spectrum are expected to subtend a negative acute angle with
the real axis at the origin, an elliptical contour (by increasing its
eccentricity) can avoid them better than circular contours. This is
seen in our results of Figure~\ref{fig:search-ellipse} (where we do
not see the signs of essential spectrum that we saw in
Figure~\ref{fig:search}). The Ritz values in this figure were output
after a few iterations of Algorithm~\ref{alg:polyfeast} employing the
quadrature formula in~\eqref{eq:zkwk-ellipse} with two overlapping
ellipses centered at $y = 3$ and $y=4$, each with $\gamma=1$,
$\rho^{-1}=0.8$, $N = 10$, and $m=20$. For this figure, the discrete
nonlinear eigenproblem was built using $\lambda=1000$~nm, $\alpha=5$
(the default value of $\alpha$ used for all computations in this section), 
$p=10$, and a mesh with curved elements sufficient to resolve the thin
geometrical features. A part of this mesh is visible in
Figure~\ref{fig:modes_poletti}.  We will refine this mesh many times
over for some computations below.

Let us now focus on one of these Ritz values near $Z=2.18$ and run
Algorithm~\ref{alg:polyfeast} to convergence using a tight circular
contour that excludes all other Ritz values. The corresponding
computed leaky mode is shown in Figure~\ref{subfig:LP01} for the case
$p=20$.  Although the algorithm quickly converges for various
discretization parameters to (visually) the same eigenmode, we
observed a surprisingly {\em large preasymptotic} regime where the
imaginary part of the eigenvalues varied significantly even as meshes
were made finer and polynomial degrees were increased. Convergence was
observed only after crossing this preasymptotic regime. We proceed to
describe its implication on estimating mode loss, an important
practical quantity of interest.  Confinement loss (CL) in fibers,
usually expressed in decibels (dB) per meter, refer to
$-10\log_{10}(\wp(1)/\wp(0))$ where $\wp(z)$ is the power at the length
$z$~meters. For a leaky mode, viewing
$\wp(z)$ as proportional to $|e^{-\ii\beta z}|^2$,
its CL can be estimated (see e.g., \cite[pp.~213]{Reide16})
from the propagation constant by
CL$=-20 \log_{10} e^{-\Im \beta} = 20 \Im\beta / \ln(10)$.

\begin{figure}
  \centering
  \includegraphics[width=\textwidth]{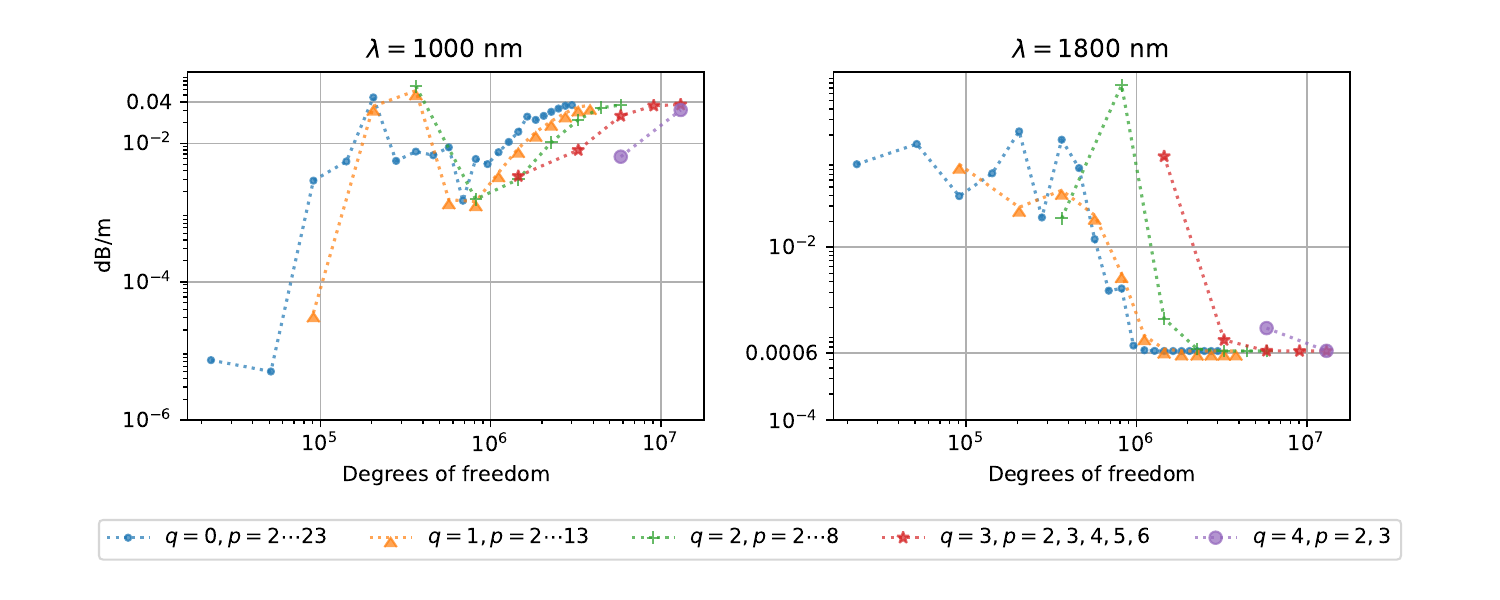}
  \caption{Computed confinement losses for the hollow core fiber
    show prominent preasymptotic variations for lower mesh refinements
    ($q$) and polynomial degrees $(p)$.}
  \label{fig:CL}
\end{figure}

We computed CL from the eigenvalues obtained for various $h$ and
$p$. To systematically vary $h$, we started with an initial mesh (part
of which is visible in
Figures~\ref{fig:modes_poletti} and \ref{fig:highlossmodes}) and
performed successive refinements. One refinement divides each
triangular element in the mesh into four (and the four are exactly
congruent when the element is not curved). Thus, the mesh after ${q}$
refinements has a grid size $2^{-{q}}$ times smaller than the
starting mesh.  The results for various ${q}$ and degrees $p$ are in
Figure~\ref{fig:CL}. Each dotted curve there represents many
computations performed on a fixed mesh (i.e., fixed ${q}$-value) for 
increasing values of the degree~$p$. For the
case of $\lambda=1000$~nm we observe that the computed CL values
appear to converge to around 0.04 dB/m, but only well
after a few millions of degrees of freedom. In particular, the CL
values computed using discretizations with under one million degrees
of freedom are {\em off by a few orders} of magnitude.  We also
observe that quicker routes (more efficient in terms of degrees of
freedom) to converged CL values are offered by the choices that use
higher degrees~$p$ (rather than higher mesh refinements ${q}$).  The
second plot in Figure~\ref{fig:CL} shows similar results for the case
of operating wavelength $\lambda=1800$~nm. In this case, CL seems to be
largely overestimated in a preasymptotic regime, but computations
using upwards of several millions of degrees of freedom agree on a
value of around CL$=0.0006$~dB/m, as seen from Figure~\ref{fig:CL}.

We have also confirmed that
our results remain stable in the asymptotic
convergent regime as we vary the PML parameters.
Table~\ref{tab:clstudy} shows
an example of results from one such parameter
variation study.
Remaining in the above-mentioned case of 
$\lambda = 1800$~nm, we focus on how one of the points (for $q=1,
p=10$)
in the second
plot of Figure~\ref{fig:CL} varies
under changes in PML width and strength.
Table~\ref{tab:clstudy} displays how 
the computed CL values vary slightly around $0.000628$~dB/m.

\begin{table}
	\centering
	\begin{tabular}{|c|c|c|c|c|}
          \hline
          & & \multicolumn{3}{c|}{CL (dB/m)}
          \\
          \cline{3-5}
          PML width ($\mu$m) &	Degrees of freedom
            &$\alpha = 1$ & $\alpha = 5$ & $\alpha = 10$ \\ \hline
		 50 & 2270641 & 0.000630 & 0.000629 & 0.000629 \\ 
		100 & 2603121 & 0.000628 & 0.000628 & 0.000629 \\ 
		150 & 2482041 & 0.000628 & 0.000628 & 0.000628 \\ \hline
	\end{tabular}
	\caption{Computed confinement losses for varying PML widths
          and PML parameters $\alpha$ (fixing $q=1$ and  $p = 10$).}
	\label{tab:clstudy}
\end{table}

We conclude this section
by presenting visualizations of modes through plots of
their intensities (which are proportional to the square moduli) of the
computed modes.  In Figure~\ref{fig:modes_poletti}, in
addition to the fundamental mode (Figure~\ref{subfig:LP01}), we
display a few further higher order modes we found
(Figures~\ref{subfig:LP11a}---\ref{subfig:LP21b}). Although these higher
order modes exhibit good core localization, their CL values are higher
than the fundamental mode. This hollow core structure also admits
modes that support transmission outside of the central hollow core,
such as those shown in Figure~\ref{fig:highlossmodes}. They are, however,
much more lossy than the fundamental mode.

\section{Proofs}
\label{sec:proof}

A basic ingredient for proving Theorem~\ref{thm:projlin} is the
next result which can be found
proved using differential equations
in~\cite[Chapter~7]{GohLanRod82}. We give a different elementary
proof. Let $N$ denote a $k\times k$ nilpotent
matrix with zero entries except for ones on the first
superdiagonal, and let $I$ be the $k\times k$ identity matrix.
Then $J=\lambda I + N $ is a $k \times k$ Jordan matrix.

\begin{lemma}
  \label{lem:chain}
  A sequence $v_0, v_1, \ldots v_{k-1}$ in $\C^n$ is a
  nontrivial Jordan chain of a nonlinear eigenvalue $\lambda$
  of $P(z)$ in
  the sense of definition~\eqref{eq:18}, if and only if $v_0 \ne 0$
  and $V = [v_0, v_1, \ldots v_{k-1}] \in \C^{n \times k}$ satisfies
  \begin{equation}
    \label{eq:16a}
    \sum_{i=0}^d A_i V J^i = 0.
  \end{equation}
\end{lemma}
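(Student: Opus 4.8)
The plan is to reduce the single matrix identity \eqref{eq:16a} to a column-by-column statement that is literally the chain condition \eqref{eq:18}. The only structural facts needed about the Jordan matrix $J=\lambda I+N$ are that $I$ and $N$ commute and that $N^{k}=0$; hence the binomial theorem gives $J^{i}=(\lambda I+N)^{i}=\sum_{l=0}^{i}\binom{i}{l}\lambda^{i-l}N^{l}$ for each $i$. Substituting this into the left-hand side of \eqref{eq:16a} and interchanging the two (finite) sums,
\[
  \sum_{i=0}^{d}A_iVJ^{i}
  =\sum_{i=0}^{d}\sum_{l=0}^{i}\binom{i}{l}\lambda^{i-l}A_iVN^{l}
  =\sum_{l=0}^{d}\Bigl(\sum_{i=l}^{d}\binom{i}{l}\lambda^{i-l}A_i\Bigr)VN^{l}.
\]
Differentiating $P(z)=\sum_j z^jA_j$ from \eqref{eq:2} term by term $l$ times shows that the parenthesized inner sum is exactly $\tfrac1{l!}P^{(l)}(\lambda)$, so \eqref{eq:16a} is equivalent to $\sum_{l=0}^{d}\tfrac1{l!}P^{(l)}(\lambda)\,VN^{l}=0$.

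Next I would read off the effect of right multiplication by $N^{l}$ on $V=[v_0,v_1,\dots,v_{k-1}]$: since $N$ carries ones on the first superdiagonal, $VN^{l}$ is the $n\times k$ matrix whose $j$-th column (indexing columns by $j=0,\dots,k-1$) equals $v_{j-l}$ when $j\ge l$ and is zero when $j<l$. Consequently the $j$-th column of $\sum_{l=0}^{d}\tfrac1{l!}P^{(l)}(\lambda)VN^{l}$ is $\sum_{l=0}^{\min(j,d)}\tfrac1{l!}P^{(l)}(\lambda)v_{j-l}$, and since $P^{(l)}\equiv0$ for $l>d$ this coincides with $\sum_{l=0}^{j}\tfrac1{l!}P^{(l)}(\lambda)v_{j-l}$, the left-hand side of the $j$-th equation in \eqref{eq:18}. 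Thus \eqref{eq:16a} holds if and only if every column of the matrix vanishes, i.e., if and only if the system \eqref{eq:18} holds for $j=0,\dots,k-1$. The word ``nontrivial'' carries the same meaning, namely $v_0\ne0$, on both sides of the asserted equivalence, so the proof is complete once these two reformulations are in hand.

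I do not expect a genuine obstacle here: the argument is essentially an exercise in swapping finite sums and in recognizing the superdiagonal shift. The one place that needs care is the bookkeeping of that shift — right multiplication by $N$ moves the columns of $V$ \emph{forward}, so $VN^{l}$ has its first $l$ columns zero — together with the fact that the nilpotency index $N^{k}=0$ matches the range $j=0,\dots,k-1$ of the chain equations; reversing either of these would misalign the indices. The corresponding characterization of left Jordan chains follows from the same computation applied with conjugate transposes (or to $P(z)^{*}$), although only the right-chain identity \eqref{eq:16a} is recorded in the lemma.
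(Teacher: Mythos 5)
Your proof is correct and follows essentially the same route as the paper's: expand $J^{i}=(\lambda I+N)^{i}$ by the binomial theorem, interchange the two finite sums to recognize the inner sum as $\tfrac1{l!}P^{(l)}(\lambda)$, and read off the chain equations \eqref{eq:18} column by column from the shift action of $N^{l}$ on $V$. The only cosmetic difference is that you keep the binomial sum up to $i$ and let $N^{l}=0$ for $l\ge k$ and $P^{(l)}\equiv 0$ for $l>d$ absorb the extra terms, whereas the paper truncates at $\min(i,k-1)$ and carries out the reindexing more explicitly.
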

\begin{proof}
  Let $s_{i\ell} = (\begin{smallmatrix} i \\ \ell
      \end{smallmatrix}) \,\lambda^{i-\ell}A_i V N^\ell$.  The sum
      in~\eqref{eq:16a} can be alternately expressed as
  \begin{align*}
    \sum_{i=0}^d A_i V J^i
    & = \sum_{i=0}^d A_i V
    \sum_{\ell=0}^{\min(i, k-1)}
    \begin{pmatrix}
      i \\ \ell
    \end{pmatrix}
    \lambda^{i-\ell} N^\ell
    =\left( \sum_{i=0}^{k-1} 
    + \sum_{i=k}^{d}
    \right)  \sum_{\ell=0}^{\min(i, k-1)}
    s_{i\ell}
    \\
    & =
      \left(\sum_{i=0}^{k-1} \sum_{\ell=0}^i
      + \sum_{i=k}^d \sum_{\ell=0}^{k-1}\right)
      s_{i\ell}
    = 
      \left(\sum_{\ell=0}^{k-1} \sum_{i=\ell}^{k-1}
      + \sum_{\ell=0}^{k-1} \sum_{i=k}^d\right) s_{i\ell}
      = \sum_{\ell=0}^{k-1} \sum_{i=\ell}^d s_{i\ell}.
  \end{align*}
  Observing the connection between the summands and the 
  derivatives of $P(\lambda)$, 
  \[
    \sum_{i=\ell}^d
    s_{i\ell} =
    \frac{1}{\ell!} \big[
    \,0\cdots 0_\ell, \;
    P^{(\ell)}(\lambda) v_0, \ldots,
    P^{(\ell)}(\lambda) v_{k-1-\ell}\big].
  \]
  where $0\cdots 0_\ell$ denote $\ell$ zero columns.  
  Hence~\eqref{eq:16a} is equivalent to
  \[
    \sum_{\ell=0}^{k-1}\frac{1}{\ell!} \big[
    \,0\cdots 0_\ell, \;
    P^{(\ell)}(\lambda) v_0, \ldots,
    P^{(\ell)}(\lambda) v_{k-1-\ell}\big] = 0.
  \]
  Since the $j$th column of the left hand side is the same as
  $\sum_{\ell=0}^j (1/\ell!) P^{(\ell)}(\lambda) v_{j-\ell},$ these
  equations are exactly the same as the defining requirements for
  $v_i$ to form a Jordan chain for a nonlinear eigenvalue $\lambda$,
  per definition~\eqref{eq:18}.
\end{proof}

\begin{proof}[Proof of Theorem~\ref{thm:projlin}]
  Let $\lambda$ be a nonlinear eigenvalue enclosed by $\vG$. Then, by
  Lemma~\ref{lem:chain}, $v_0, v_1, \ldots v_{k-1}$ in $\C^n$ is an
  associated (right) Jordan chain if and only if equation~\eqref{eq:16a}
  holds, which is the same as the {\em last} equation of the following
  system
    \begin{equation}
    \label{eq:3}    
  \begin{bmatrix}
    0     &  I  &         & \\
    \vdots&     & \ddots  & \\
    0     &     &         & I \\
    A_0   & A_1 & \cdots  & A_{d-1}
  \end{bmatrix}
  \underbrace{
    \begin{bmatrix}
      V \\ VJ \\ \vdots \\ VJ^{d-1}
    \end{bmatrix}
  }_{\Vc}
    =
    \begin{bmatrix}   
      I    & \\
      & \ddots\\
      &        & I & \\
      &        &   & -A_d
    \end{bmatrix}
    \underbrace{
      \begin{bmatrix}
        V \\ VJ \\ \vdots \\ VJ^{d-1}
      \end{bmatrix}}_{\Vc} J.
  \end{equation}
  Considering that the remaining equations of~\eqref{eq:3} trivially
  hold, we have shown that \eqref{eq:16a} holds if and only
  if~\eqref{eq:3} holds.  Let $\Vc_i \in \C^{n d}$ denote the
  $(i+1)^{\text{th}}$ column of the $nd \times k$ matrix $\Vc$
  indicated in~\eqref{eq:3}, so that
  $\Vc = [ \Vc_0, \;\Vc_1, \ldots, \Vc_{k-1}]$.  Since~\eqref{eq:3} is
  the same as $\Ac \Vc = \Bc \Vc J$, or equivalently (see
  the characterization~\eqref{eq:17})
  \[
    (\Ac - \lambda \Bc) \Vc_0 = 0, \;\text{ and } (\Ac - \lambda \Bc)
    \Vc_i = \Bc \Vc_{i-1} \;\text{ for } i=1, 2,\ldots, k-1,
  \]  
  i.e., the columns of $\Vc$ form a Jordan
  chain for the linear matrix pencil $\Ac - \lambda \Bc$.  
  Noting that $v_i = F \Vc_i$, we have thus shown that
  $\Vc_0, \ldots, \Vc_{k-1} \in \C^{nd}$ is a Jordan chain for
  $\Ac - \lambda \Bc$ if and only if its {\em first} blocks, namely
  $v_0, v_1, \ldots v_{k-1}$ in $\C^n,$ form a Jordan chain for
  $P(\lambda)$. Since any $x \in \C^n$ falling within the algebraic
  eigenspace of the nonlinear eigenvalue $\lambda$ is a linear
  combination of such chains $\{v_i\}$, the proof of the first item
  of the theorem is complete.

  To prove the second item of the theorem, we start as above with the
  nonlinear eigenvalue $\lambda,$ but now proceed with its {\em left}
  Jordan chain $\vt_0, \vt_1, \ldots \vt_{k-1} \in \C^n.$ The second
  definition of~\eqref{eq:18} implies
  $\sum_{l=0}^j (l!)^{-1} \Pt^{(l)}(\bar\lambda) \vt_{j-l} = 0$ where
  $\Pt(z) = \sum_{j=0}^d z^j A_j^*$. Applying Lemma~\ref{lem:chain} to
  $\Pt(z)$, we find that
  $\Vt = [\vt_0, \vt_1, \ldots \vt_{k-1}] \in \C^{n \times k}$
  satisfies
  \begin{equation}
    \label{eq:34}
    \sum_{i=0}^d A_i^* \Vt \bar J^i =0.
  \end{equation}
  Simple calculations show that~\eqref{eq:34} holds if and only if
  \begin{equation}
    \label{eq:35}
    \begin{bmatrix}
      0  &\cdots & 0     & A_0^*  \\
      I  &\ddots &\vdots & \vdots \\
         &\ddots & 0     & A_{d-2}^*\\
         &       & I     & A_{d-1}^*\\
    \end{bmatrix}
    \begin{bmatrix}
    W_0 \\ \vdots \\ W_{d-2} \\ \Vt  
  \end{bmatrix}
  =
  \begin{bmatrix}   
    I    & \\
    & \ddots\\
    &        & I & \\
    &        &   & -A_d^*
  \end{bmatrix}
  \begin{bmatrix}
    W_0 \\ \vdots \\  W_{d-2} \\ \Vt  
  \end{bmatrix}
  \bar J
  \end{equation}
  with $W_{d-i} = -\sum_{j=0}^{i-1} A^*_{d-j} \Vt \bar{J}^{i-1-j}$ for
  $i=2, \ldots d$. Taking the conjugate transpose of both sides
  of~\eqref{eq:35}, we find that
  $\Vct^* = [W_0^*, \cdots, W_{d-2}^*, \Vt^*] \in \C^{k \times nd}$
  satisfies $\Vct^* \Ac = J' \Vct^* \Bc,$ an identity which when
  written using the columns $\Vct_i \in \C^{nd}$ of 
  $\Vct=[\Vct_0,\ldots, \Vct_{k-1}]$ reads
  \[
    \Vct_0^*(\Ac - \lambda \Bc) =0,
    \text{ and }
    \Vct_i^*( \Ac- \lambda \Bc) = \Vct_{i-1}^* \Bc.
  \]
  Keeping~\eqref{eq:17adj} in view, the above equivalences have thus shown
  that the columns of $\Vct$ form a left Jordan chain of
  $\Ac - \lambda \Bc$ if and only if the the columns of the {\em last}
  block of $\Vct$, namely $L\Vct = \Vt = [\vt_0, \ldots, \vt_{k-1}]$
  form the left Jordan chain of $\lambda$ as a nonlinear eigenvalue of
  $P(z)$.
\end{proof}

\begin{proof}[Proof of Theorem~\ref{thm:resolvent}]
  The given $X$  satisfies $(z\Bc - \Ac)X = Y.$ In block component form,
  this yields
  \begin{subequations}
    \begin{gather}
      \label{eq:10}
      z X_{i-1} - X_i   = Y_{i-1}, \qquad i=1, 2, \ldots, d-1,
      \\ \label{eq:11a}
      -A_0 X_0 - A_1 X_1 - \dots - A_{d-2} X_{d-2} - ( z A_d + A_{d-1})
      X_{d-1} = Y_{d-1}.
    \end{gather}    
  \end{subequations}
  Clearly~\eqref{eq:10} is the same as~\eqref{eq:Xcomps-i}. Moreover, 
  the $i^{\text{th}}$ equation of~\eqref{eq:10}, when combined with the
  $(i-1)^{\text{th}}$ equation  of~\eqref{eq:10}, yields
  $
  X_i = z X_{i-1} - Y_{i-1} = z(z X_{i-2} - Y_{i-2}) - Y_{i-1}.
  $
  This process can be recursively continued to get
  \begin{gather*}
    X_i = z^i X_0 -
    \sum_{j=0}^{i-1} z^{i-1-j} Y_j,
    \qquad i=1, 2, \ldots, d-1.
  \end{gather*}
  Substituting these expressions for $X_i$ for $i \ge 1$
  into~\eqref{eq:11a}, we obtain
  \[
    \sum_{i=0} ^{d-1} A_i
    \bigg(
    z^i  X_0 -
    \sum_{j=0}^{i-1} z^{i-1-j} Y_j,
    \bigg)
    = -Y_{d-1}
    -z A_d
    \bigg(z^{d-1} X_0 -
  \sum_{j=0}^{d-2} z^{d-2-j} Y_j\bigg).
\]
Moving the term with $z^d$ from right to left, we identify a group of
terms that sum to $P(z) X_0$. Sending
all the remaining terms on the left to the right, 
simplifying, and applying $P(z)^{-1}$ to both sides, we
obtain~\eqref{eq:Xcomps-0}.

For proving~\eqref{eq:Xcomps-adj}, let us rewrite the equation
$(z\Bc - \Ac)^* X= W$ as the system of equations
\begin{subequations}
  \label{eq:978}
  \begin{align}
    \label{eq:9}
    \bar z {\Xt}_0 - A_0^* {\Xt}_{d-1} & = W_0
    \\ \label{eq:7}
    -{\Xt}_{i-1} + \bar z {\Xt}_i - A_{i}^* {\Xt}_{d-1}
                                       & = W_i, \qquad i=1, 2, \ldots, d-2,
    \\ \label{eq:8}
    -{\Xt}_{d-2} - (\bar z A_d^* + A_{d-1}^*) {\Xt}_{d-1}
                                       & = W_{d-1},
  \end{align}
\end{subequations}
Multiply~\eqref{eq:7} by $\bar z^i$
and add up all the equations of~\eqref{eq:9}--\eqref{eq:7}. Then
observe that all terms of the type $-\bar z^{i} \Xt_{i-1}$ telescopically
cancel off in the resulting sum, yielding
\[
  \bar z^{d-1} {\Xt}_{d-2}
  = W_0 + \bar z W_1 + \bar z^2 W_2 + \cdots + \bar z^{d-2} W_{d-2}
  + (A_0^* + \bar z A_1^* 
  + \cdots + \bar z^{d-2} A_{d-2}^*)
  {\Xt}_{d-1}.
\]
Using~\eqref{eq:8} to eliminate ${\Xt}_{d-2}$ from the last equation and
rearranging, we have
\[
  \sum_{j=0}^{d-1} \bar z^j W_j + P(z)^* {\Xt}_{d-1} = 0,
\]
which
immediately yields the expression for ${\Xt}_{d-1}$
in~\eqref{eq:Xcomps:d-1}. The expressions for the 
remaining ${\Xt}_i$ in~\eqref{eq:Xcomps-adj}
follow from~\eqref{eq:8} and~\eqref{eq:7}, respectively.
\end{proof}

\section{Conclusion}
\label{sec:conclusion}

We have presented a new technique to compute collections
of transverse leaky modes of
optical fibers with complex microstructure by combining advances in
contour integral eigensolvers and frequency-dependent PML.  The
frequency-dependent PML is not yet widely used for resonance
computations due to the difficulties in solving the resulting
nonlinear eigenproblem. The new avenue we presented using
Algorithm~\ref{alg:polyfeast} makes it a viable computational option
when a few resonance values (enclosed in a given contour) is of
interest.

Algorithm~\ref{alg:polyfeast} is 
applicable more generally to solve for clusters of
nonlinear eigenvalues of the polynomial type arising from any
application.  The efficiencies in the algorithm were gained by
circumventing the typical large inverses arising from linearization of
the polynomial eigenproblem.

We have exploited FEAST's flexibility with contours to design ellipses
that effectively probe wanted resonances without interference from the
deformed essential spectrum. The algorithm also eliminates unwanted
eigenfunctions supported in the PML region by sending them to the
eigenspace of infinity.

While confinement loss values for some fiber geometries (such as that
in Subsection~\ref{ssec:verif}) can be computed easily and fast, the
antiresonant fiber we considered in Section~\ref{sec:microfiber}
presented a preasymptotic regime, which was surprisingly large for a
two-dimensional structure, where computed loss values vary by orders
of magnitude. By reporting this in detail, we hope to bring more
awareness of this issue to those estimating losses  of similar
structures with thin filaments.

\bigskip

\subsection*{Acknowledgements} 

We gratefully acknowledge extensive discussions with Dr.~Jacob Grosek
(Directed Energy, Air Force Research Laboratory, Kirtland, NM) on
practical issues with the accurate computation of transverse modes of
optical fibers
and with Dr.~Markus Wess (ENSTA, Paris) on his dissertation research.
This work was supported in part by
AFOSR grant FA9550-19-1-0237, AFRL Cooperative Agreement 18RDCOR018,
and NSF grant DMS-1912779.



\end{document}